\documentclass[12pt,technote, onecolumn, draft]{IEEEtran}
\usepackage[utf8]{inputenc}
\usepackage{latexsym}
\usepackage{mathrsfs}
\usepackage{graphicx}
\usepackage{multirow}
\usepackage{rotating}
\usepackage{makecell}
\usepackage{amsfonts,amssymb,amsmath,amsthm,bm}
\usepackage{color}
\usepackage{booktabs}
\usepackage{cite}
\newtheorem{theorem}{Theorem}[section] 
\newtheorem{definition}[theorem]{Definition} 
\newtheorem{lemma}[theorem]{Lemma} 
\newtheorem{corollary}[theorem]{Corollary}
\newtheorem{example}[theorem]{Example}
\newtheorem{proposition}[theorem]{Proposition}
\newtheorem{remark}[theorem]{Remark}

\begin{document}    
	\title{Construction of Multi-Sequences With High Nonlinear Complexity via Narrow Ray Class Fields\thanks{Xiaofeng Liu and Fang-Wei Fu were supported by the National Key Research and Development Program of China (Grant Nos.  2022YFA1005000), the National Natural Science Foundation of China (Grant Nos. 12141108, 61971243), the Fundamental Research Funds for the Central Universities of China (Nankai University), and the Nankai Zhide Foundation. Jun Zhang was supported by the National Natural Science Foundation of China under Grant No. 12441105.}}
	\author{Xiaofeng Liu, Jun Zhang, Fang-Wei Fu
		\IEEEcompsocitemizethanks{\IEEEcompsocthanksitem Xiaofeng Liu and Fang-Wei Fu are with the Chern Institute of Mathematics and LPMC, Nankai University, Tianjin 300071, China, Emails: lxfhah@mail.nankai.edu.cn, fwfu@nankai.edu.cn. Jun Zhang is with the Center for Discrete Mathematics, and College of Mathematics and Statistics, Chongqing University, Chongqing 401331, China. Email: jun\_zhang@cqu.edu.cn.
		}
	}{\tiny }
	\maketitle	
	\begin{abstract}
		Nonlinear complexity is a fundamental criterion in the evaluation of pseudorandom sequences. The construction of multi-sequences with high nonlinear complexity is both theoretically and practically important in cryptography. Motivated by prior constructions of multi-sequences with high nonlinear complexity in [IEEE Trans. Inf. Theory, 60(10), 2014] and [IEEE Trans. Inf. Theory, 63(12), 2017], we provide a unified framework via narrow ray class fields and the cyclic descent introduced by Guruswami and Xing in [J. Combin. Theory Ser. A 129 (2015) ]. Then we can generate new multi-sequences with high nonlinear complexity over function fields with arbitrary genera.
        \end{abstract}
	\begin{IEEEkeywords}
		 Nonlinear complexity, multi-sequences, $\beta-$error nonlinear complexity, narrow ray class fields.
	\end{IEEEkeywords}
	
	\section{Introduction}
	Nonlinear complexity measures the shortest nonlinear feedback rule capable of generating a sequence. The constructions of multi-sequences with high nonlinear complexity are important in  simulation and word-based ciphers. In general, a multi-sequence may have high linear complexity, while still having low nonlinear complexity. Therefore, developing a systematic method to generate multi-sequences with high nonlinear complexity is of both theoretical interest and practical importance.

    Function fields have been widely applied in coding theory and cryptography.  In particular, function fields serve as a powerful tool for generating pseudorandom sequences. Prior research has extensively examined the linear complexity and correlation of sequences constructed via function fields. However, the construction of multi-sequences with high nonlinear complexity via function fields remains comparatively limited.
    
    Now we recall some known constructions of multi-sequences. In 2009, Xing and Ding constructed multi-sequences with large linear complexity and $k-$error linear complexity from Hermitian function fields; see \cite{1}. In 2014, Niederreiter and Xing constructed a family of multi-sequences with high nonlinear complexity via the Hermitian function field; see \cite{4}. 
    In 2017, Luo \textit{et al.} constructed a family of sequences with high nonlinear complexity via cyclotomic function fields; see \cite{3}. In 2023, $\ddot{\mathrm{O}}$zbudak and Tutas constructed two families of multi-sequences with high nonlinear complexity via Hermitian and Suzuki function fields; see \cite{15}. In 2023, Castellanos \textit{et al.} designed a family of multi-sequences from the generalized Hermitian function field; see \cite{21}. 
    In 2026, Hu and Ma studied the automorphism groups of a special maximal function field, known as the Abdon-Torres function field; see \cite{29}. Recently, Hu \textit{et. al.} also proposed a general construction of multi-sequences and provided some explicit constructions of multi-sequences with high linear complexity; see \cite{100}. 
   
    Narrow ray class fields also serve as a powerful tool in coding theory. In 1997, Niederreiter and Xing proposed a general method to construct function fields with many rational places via Drinfeld modules of rank $1$ and narrow ray class fields; see \cite{49}. In 2015, Guruswami and Xing achieved optimal rate algebraic list decoding from descent function fields via narrow ray class fields; see \cite{26}.
	It turns out that such descent function fields are also ideal to generate multi-sequences. In Section \ref{sec:3}, we proved that the multi-sequences constructed via narrow ray class fields have high dimensions, large linear and nonlinear complexity. In particular, constructions in \cite{2,3} can be generalized in narrow ray class field constructions.

	The rest of this paper is organized as follows. In Section \ref{sec:2}, we provide some preliminaries on nonlinear complexity, fundamental results of function fields, class field theory, and Ihara's constant. In Section \ref{sec:7}, we list the main results and techniques used throughout this paper.
	In Section \ref{sec:3}, we provide a construction of multi-sequences via narrow ray class fields. It turns out that such multi-sequences achieve favorable cryptographic properties.  We also provide some explicit constructions to better illustrate our construction.  In Section \ref{sec:5}, we provide some asymptotic results via function field towers.
	Finally, in Section \ref{sec:6}, we provide concluding remarks and outline some directions for future research on nonlinear complexity and constructions of multi-sequences.

	\section{Preliminaries}
	\label{sec:2}
	\subsection{The Definition and Some Indicators of multi-sequences}
    \subsubsection{ Definition of multi-sequences}
    \begin{definition}
		Let $\mathbb{F}_{q}$ be a finite field with $q$ elements. A periodic multi-sequence of dimension $b$ and period $e$ is defined as a family 
		\begin{displaymath}
			\mathcal{S}=\{\mathbf{s}^{(t)}: 1\leq t\leq b\},\ \mathbf{s}^{(t)}=\{s_{i}^{(t)}\}^{e}_{i=1}.
		\end{displaymath} 
	\end{definition}
	Several multi-sequences with high nonlinear complexity have been constructed via different methods over the past decades. Now we provide two classical constructions of multi-sequences via function fields to better illustrate the definition.
	\begin{example}
		\begin{enumerate}
			\item The multi-sequence $\mathcal{S}_{1}=\{\mathbf{s}^{1,t}: 1\leq t\leq \sqrt{q}-1\}$ constructed via the Hermitian function field $F/\mathbb{F}_{\sqrt{q}}$ defined by $\mathcal{Z}:y^{\sqrt{q}}+y=x^{\sqrt{q}+1}$, where each component sequence is given by $\mathbf{s}^{1,t}=\{z(\sigma^{i}(P_{t}))\}_{i=1}^{q-1}, 1\leq t\leq \sqrt{q}-1$; see \cite{2}.
			\item The multi-sequence $\mathcal{S}_{2}=\{\mathbf{s}^{2,t}: 1\leq t\leq \sqrt{q}-1\}$ constructed via cyclotomic function field, where each component sequence is given by $\mathbf{s}^{2,t}=\{z(\sigma^{i}(P_{t}))\}_{i=0}^{\frac{\sqrt{q}^{d}+1}{\sqrt{q}+1}-1}, 1\leq t\leq \sqrt{q}$; see \cite{3}.
		\end{enumerate}
	\end{example}
	\subsubsection{Linear Complexity}
	\begin{definition}
		The linear complexity of a multi-sequence $\mathcal{S}=\{\mathbf{s}^{(t)}:1\leq t\leq b\}$ is defined as the smallest nonnegative integer $\mathrm{LC}(\mathcal{S}):=\ell$ such that there exist elements $\lambda_0,\lambda_1,\dots,\lambda_{\ell-1}\in\mathbb{F}_q$ satisfying:
		\begin{displaymath}
			s^{(t)}_{j+\ell}=\sum^{\ell-1}_{i=0}\lambda_{i} s^{(t)}_{j+i},
		\end{displaymath}
		for all component sequences $\mathbf{s}^{(t)}$ and any index $j$. Note that each multi-sequence with period $e$ satisfies $\mathrm{LC}(\mathcal{S})\leq e$.
	\end{definition}
	
	\subsubsection{ Nonlinear Complexity}
	\begin{definition}\label{d1}
		Fix an integer $m\geq 1$, the $m$th nonlinear complexity (nonlinear complexity for short) of the multi-sequence $\mathcal{S}=\{\mathbf{s}^{(t)}:1\leq t\leq b\}$ is defined as the least integer $ \mathrm{NL}_{m}(\mathcal{S}):=\mathfrak{N}\geq 1$ such that there exists at least one multivariate polynomial $\Phi(x_{0},\cdots,x_{ \mathfrak{N}-1})\in\mathbb{F}_{q}[x_{0},\cdots,x_{\mathfrak{N}-1}]$ of the total degree at most $m$ such that
        \begin{displaymath}
s_{j+\mathfrak{N}}^{(t)}=\Phi(s_{j}^{(t)},\cdots,s_{j+\mathfrak{N}-1}^{(t)})
        \end{displaymath}
        for any $1\leq t\leq b$ and  $0\leq j\leq e-\mathfrak{N}-1$. Note that $\mathrm{NL}_{m}(\mathcal{S})\leq\mathrm{LC}(\mathcal{S})$.
	\end{definition}

	\begin{remark}
     Denote by  $\mathrm{NL}_{m}(\mathcal{S})$ the  nonlinear complexity.   If we replace the condition ``of total degree at most $m$" with  ``of degree at most $m$ in each variable" in the definition \ref{d1}, then we have the definition of joint nonlinear complexity $\mathrm{NE}_{m}(\mathcal{S})$ in \cite{5}. It is also obvious that $\mathrm{NL}_{m}(\mathcal{S})\geq \mathrm{NE}_{m}(\mathcal{S})$; see \cite{10}.
	\end{remark}

	\subsubsection{ $\beta-$error Nonlinear Complexity }
	Fix a positive integer $\beta\geq 1$. Now we shall define the $\beta-$error nonlinear complexity as follows.  It is defined as the minimal nonlinear complexity of $\mathcal{S}$ by changing in total at most $\beta$ terms for all component sequences in $\mathcal{S}$. The formal definition is given below.
	\begin{definition}
		Given two multi-sequences $\mathcal{S,U}$ with period $e$ and dimension $b$, let
        \begin{displaymath}
            d_{H}(\mathcal{S,U})=\sum^{b}_{t=1}d_{H}(\mathbf{s}^{(t)},\mathbf{u}^{(t)})
        \end{displaymath}
        be the total Hamming distance between $\mathcal{S}$ and $\mathcal{U}$. The $\beta-$error nonlinear complexity is defined as
        \begin{displaymath}
            \mathrm{NL}_{m,\beta}(\mathcal{S})=\min_{d_{H}(\mathcal{S,U})\leq\beta}\mathrm{NL}_{m}(\mathcal{U}).
        \end{displaymath}
	\end{definition}

	\subsection{Extension Theory of Function Fields}
	Consider an algebraic function field $F/\mathbb{F}_{q}$ with genus $\mathfrak{g}_{F}$. A place $P$ is the maximal ideal of some valuation ring $\mathcal{O}_{P}$ of $F/\mathbb{F}_{q}$. The degree of $P$, denoted by $\deg(P)$, is determined by the extension $F_{P} = \mathcal{O}_{P}/P$, \textit{i.e.}, $\deg(P) = [F_P : \mathbb{F}_{q}]$. A place $P$ with $\deg(P) = 1$ is called rational. The number of rational places of $F$ is denoted by $N(F)$. According to the Serre bound, we have $|N(F)-q-1|\leq \mathfrak{g}_{F}\cdot\lfloor2\sqrt{q}\rfloor,$
	where $\lfloor x\rfloor$ is denoted by the greatest integer less than or equal to $x$. A divisor is formally defined as a finite linear combination of places.   Denote by $\mathbb{P}_{F}$ and $\mathbb{D}_{F}$ the collection of all places and divisors, respectively. In addition, a normalized discrete valuation $\mathrm{v}_{P}$ is assigned to each place $P$ of $F$. 
	
	For any divisor $G \in \mathbb{D}_{F}$, we denote by $\mathrm{v}_{P}(G)$ the coefficient of $P$ in $G$. The support of a divisor $G$ is defined as
	$\mathrm{Supp}(G) = \{ P \in \mathbb{P}_{F} \mid \mathrm{v}_{P}(G) \neq 0 \}.$
	The degree of a divisor $G$ is given by $\deg(G) = \sum_{P \in \mathbb{P}_{F}} \mathrm{v}_{P}(G)\deg(P).$
	A divisor $G = \sum_{P \in \mathbb{P}_{F}} \mathrm{v}_{P}(G)P$ is called effective if $\mathrm{v}_{P}(G) \ge 0$ for each $P \in \mathbb{P}_{F}$.
	
	 For any non-negative divisor $G$, the associated Riemann–Roch space is defined as
	$\mathcal{L}(G):=\{f\in F\setminus\{0\}\mid (f)+G\geq 0\}\cup\{0\}$.  If $\deg(G)\geq 2\mathfrak{g}_{F}-1$, then we have $\dim\mathcal{L}(G)=\deg(G)+1-\mathfrak{g}_{F}$ from the Riemann-Roch theorem.
	
	The automorphism group $\mathrm{Aut}(F/\mathbb{F}_{q})$ of  $F$ over $\mathbb{F}_{q}$ is defined by
	\[
	\mathrm{Aut}(F/\mathbb{F}_{q})=\{\sigma : F\rightarrow F \mid \sigma \text{ is an }\mathbb{F}_{q}\text{-automorphism of }F\}.
	\]
	For any rational place $P$ of $F$ and any automorphism $\sigma\in \mathrm{Aut}(F/\mathbb{F}_{q})$, the image $\sigma(P)$ is again a place of $F$, and the following properties hold.
	
	\begin{lemma}(see \cite{11})
		Given any rational place $P$, $f\in F$ and $\sigma\in \mathrm{Aut}(F/\mathbb{F}_{q})$, then we have:
		\begin{enumerate}
			\item  $\deg(\sigma(P)) =\deg(P)$;
			\item  $\mathrm{v}_{\sigma(P)}(\sigma(f))=\mathrm{v}_{P}(f)$;
			\item $\sigma(f)(\sigma(P)) =f(P)$, if $\mathrm{v}_{P}(f)\geq0$.
		\end{enumerate}
		\label{l1}
	\end{lemma}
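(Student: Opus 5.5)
The plan is to derive all three items from the definition of how an automorphism acts on places. For a place $P$ with valuation ring $\mathcal{O}_P$, the image $\sigma(P)$ is the place $\{\sigma(z): z\in P\}$, and its valuation ring is $\sigma(\mathcal{O}_P)$. First I will check that $\sigma(P)$ really is a place. Since $\sigma$ is a field automorphism fixing $\mathbb{F}_q$, the set $\sigma(\mathcal{O}_P)$ is a valuation ring of $F/\mathbb{F}_q$: for $z\in F$, either $\sigma^{-1}(z)\in\mathcal{O}_P$ or $\sigma^{-1}(z)^{-1}\in\mathcal{O}_P$, and applying $\sigma$ gives the same dichotomy for $z$. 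Its unique maximal ideal is $\sigma(P)$.

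For item 1, $\sigma$ restricts to a ring isomorphism $\mathcal{O}_P\to\mathcal{O}_{\sigma(P)}$ carrying $P$ onto $\sigma(P)$. It therefore induces an isomorphism of residue fields $F_P\to F_{\sigma(P)}$. This isomorphism is $\mathbb{F}_q$-linear because $\sigma$ fixes $\mathbb{F}_q$, so the two residue fields have the same dimension over $\mathbb{F}_q$, and $\deg(\sigma(P))=\deg(P)$.

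For item 2, I will define $w(z):=\mathrm{v}_P(\sigma^{-1}(z))$ for $z\in F$. Then $w$ is a normalized discrete valuation of $F$, since it is surjective onto $\mathbb{Z}\cup\{\infty\}$ because $\sigma$ is bijective. Its valuation ring is $\{z:\sigma^{-1}(z)\in\mathcal{O}_P\}=\sigma(\mathcal{O}_P)=\mathcal{O}_{\sigma(P)}$. A place determines its normalized valuation uniquely, so $w=\mathrm{v}_{\sigma(P)}$. Substituting $z=\sigma(f)$ then gives $\mathrm{v}_{\sigma(P)}(\sigma(f))=\mathrm{v}_P(f)$. The main point to get right is exactly this uniqueness of the normalized valuation attached to a valuation ring. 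I would justify it by noting that a prime element $t$ of $P$ maps to a prime element $\sigma(t)$ of $\sigma(P)$.

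For item 3, $P$ is rational and $\mathrm{v}_P(f)\ge0$, so $f\in\mathcal{O}_P$ and $f(P)=a\in\mathbb{F}_q$, that is, $f-a\in P$. Applying $\sigma$ and using $\sigma(a)=a$ gives $\sigma(f)-a\in\sigma(P)$. By item 2, $\sigma(f)\in\mathcal{O}_{\sigma(P)}$, and by item 1, $\sigma(P)$ is rational. Hence $\sigma(f)(\sigma(P))=a=f(P)$. No step here is a serious obstacle; the only care needed is in the valuation-uniqueness argument for item 2.
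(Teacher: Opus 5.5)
Your proof is correct and complete: transporting $\mathcal{O}_P$, its maximal ideal, a prime element $t$ and the residue map along $\sigma$ gives all three items. For item 2 you could even skip the auxiliary valuation $w$ by writing $f=t^n u$ and noting that $\sigma(f)=\sigma(t)^n\sigma(u)$ with $\sigma(u)\in\mathcal{O}_{\sigma(P)}^{*}$. The paper gives no proof of its own and simply cites \cite{11}, and your argument is the standard one behind that reference.
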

	Let $H/\mathbb{F}_{q}$ be a finite algebraic extension of $F$. 
	For any place $P\in\mathbb{P}_{F}$ and $Q\in\mathbb{P}_{H}$ with $Q|P$, we denote by $d(Q|P), e(Q|P), f(Q|P)$ different exponent, ramification index, and relative degree of $Q|P$, respectively. The different of $H/F$ is defined as $\mathrm{Diff}(H/F)=\sum_{P\in\mathbb{P}_{F}}\sum_{Q|P}d(Q|P)Q$. The different exponent $d(Q\mid P)$ satisfies $d(Q|P)= e(Q|P)-1$ if and only if  $\gcd(e(Q|P),p)=1$. Otherwise, the different exponent can be calculated by 
	\begin{displaymath}
		d(Q|P)=\sum^{\infty}_{i=0}(|G_{i}(Q|P)|-1),
	\end{displaymath}
	where $  |G_{i}(Q|P)|$ is denoted as orders of higher ramification groups $G_{i}(Q|P)=\{\sigma\in\mathrm{Gal}(H/F): v_{Q}(\sigma(z)-z)\geq i+1\ \text{for all}\ z\in H\},$ for any $ i\geq 0$, see~\cite{7}. The Hurwitz genus formula (see \cite{7}) is given by
	\begin{displaymath}
		2\mathfrak{g}_{H}-2=[H: F]\cdot (2\mathfrak{g}_{F}-2)+\deg\mathrm{Diff}(H/F).
	\end{displaymath}

	\subsection{ Drinfeld Modules of Rank $1$ and Class Field Theory }
	In this subsection, we choose $F/\mathbb{F}_{q}$ as any function field with  $\mathfrak{g}_{F}\geq 1$. Suppose $N(F)\geq 1$ and fix a rational place $\infty$ of $F$. Denote by $A_{q}$ the integral ring $A_{q}=\{x\in F : \mathrm{v}_{P}(x)\geq 0\ \text{for all}\ P\in\mathbb{P}_{F}\ \text{with}\ P\neq\infty\}$. The Hilbert class field of $A_{q}$, denoted by $\mathcal{H}_{A_{q}}$, is the maximal abelian extension such that any finite place of $F$ is unramified and $\infty$ splits completely in $H_{A_{q}}$. We have $[\mathcal{H}_{A_{q}}:F]=\mathrm{h}(A_{q})=\mathrm{h}(F)$; see \cite{4}. 
	
	Let $\tau : c\mapsto c^{q}$ be the Frobenius endomorphism of $\mathcal{H}_{A_{q}}$. The left twisted polynomial ring $\mathcal{H}_{A_{q}}\{\tau\}$ is defined by $\tau u=u^{q}\tau$ for all $u\in \mathcal{H}_{A_{q}}$.  A Drinfeld module of rank $1$ is defined as a homomorphism $\psi:A_{q}\to \mathcal{H}_{A_{q}}\{\tau\}, a\mapsto \psi_{a}$ such that $\deg_{\tau}(\psi_{a})=-\mathrm{v}_{\infty}(a)$.
	
	Denote by $F_{\infty}$ the $\infty$-adic completion of $F/\mathbb{F}_{q}$. The sign function is defined such that $\mathrm{sgn}:F^{*}_{\infty}\to\mathbb{F}_{q}^{*}$ with
	$\mathrm{sgn}(\alpha)=\alpha$ for each $\alpha\in\mathbb{F}_{q}^{*}$. It also satisfies $\mathrm{sgn}(U^{(1)}_{F_{\infty}})=1$ for $U^{(1)}_{F_{\infty}}=\{x\in U_{F_{\infty}} : \mathrm{v}_{\infty}(x-1)\geq 1\}$.
	
	Let $\overline{\mathcal{H}_{A_{q}}}$ be the fixed algebraic closure of $\mathcal{H}_{A_{q}}$ and let $\psi$ be a sgn-normalized Drinfeld module of rank $1$ over $\mathcal{H}_{A_{q}}$. Let $M$ be a nonzero ideal of $A_{q}$, the $M$-torsion module $\Lambda_{\psi}(M)$ associated with $\psi$ is defined by
	\begin{displaymath}
		\Lambda_{\psi}(M)=\left\{\, t\in \bigl(\overline{\mathcal{H}_{A_{q}}},+\bigr)\;:\;\psi_{M}(t)=0 \right\}.
	\end{displaymath}
	If $\gcd(M,\mathrm{char}_{A_{q}}(\psi))=1$, then $\Lambda_{\psi}(M)$ is a finite set with
	$|\Lambda_{\psi}(M)|=q^{\deg(\psi_{M})}$. 
	
	The elements in $\Lambda_{\psi}(M)$ are called the $M$-torsion elements in $\left(\overline{\mathcal{H}_{A_{q}}},+\right)$. For any nonzero ideal $M$ of $A$, we associate with $M$ the divisor
	$D :=\sum_{P\in\infty'} \mathrm{v}_{P}(M)P \in \mathbb{D}_{F}$. Then we have $\Lambda_{\psi}(M)=\Lambda_{\psi}(D)$ and the class field $\mathcal{H}_{A_{q}}(\Lambda_{\psi}(D))$, which is called the $M-$division field of $\psi$. Denote the Euler function of $P\in\mathbb{P}_{F}$ by $\Phi(P)=|(A_{q}/P)^{*}|$. We have $\mathrm{Gal}(\mathcal{H}_{A_{q}}(\Lambda_{D})/F)\simeq \mathrm{Cl}^{+}_{D}(A_{q})$ and the cardinality $ |\mathrm{Cl}^{+}_{D}(A_{q})|=\mathrm{h}(F)(q^{d}-1) $ if the degree of $P$ is taken as $\deg(P)=d$; see Corollary 2.6 in \cite{6}.

    Now we summarize the prime ideal decompositions in $\mathcal{H}_{A_{q}}(\Lambda_{M})$ from Proposition 3.3.8 in \cite{4}.
	\begin{proposition} (Proposition 3.3.8, \cite{4})\label{p1}
		Let $\mathcal{H}_{A_{q}}(\Lambda_{Q})/\mathcal{H}_{A_{q}}$ be a class field with a prime modulus $Q$. Denote by $\mathcal{O}_{\mathcal{H}_{A_{q}}(\Lambda_{Q})}$ the integral ring of $\mathcal{H}_{A_{q}}(\Lambda_{Q})/\mathcal{H}_{A_{q}}$. Then the following statements hold.
		
         $1)$ Any finite place $R\nmid Q$ is unramified. Suppose $f$ is the order of $R$ modulo $Q$, \textit{i.e.}, $R^{f}\equiv 1$ modulo $Q$, then 
			$$R\mathcal{O}_{\mathcal{H}_{A_{q}}(\Lambda_{P})}=\mathfrak{Q}_{1}\mathfrak{Q}_{2}\cdots\mathfrak{Q}_{g}, e=1, f_{i}=f,g=\frac{\Phi(Q)}{f}=\frac{q^{d}-1}{f}.$$
			Then $R$ splits in $\mathcal{H}_{A_{q}}(\Lambda_{Q})$ completely if and only if $f=1$ and $R$ remains inert if and only if $f=\Phi(Q)$.
			
             $2)$ $Q$ is totally ramified in $\mathcal{H}_{A_{q}}(\Lambda_{Q})/\mathcal{H}_{A_{q}}$ and $Q\mathcal{O}_{\mathcal{H}_{A_{q}}(\Lambda_{Q})}=\mathfrak{P}^{\Phi(Q)}$ with the ramification index given by $e_{Q}( \mathcal{H}_{A_{q}}(\Lambda_{Q})/\mathcal{H}_{A_{q}} )=\Phi(Q)$, $f=g=1$.
			
            $3)$ The infinite place $\infty$ is ramified in $\mathcal{H}_{A}(\Lambda_{Q})/\mathcal{H}_{A_{q}},$ and ramification index satisfies 
          $$ e_{\infty}(\mathcal{H}_{A_{q}}(\Lambda_{Q})/\mathcal{H}_{A_{q}})=q-1,f=1, g=\Phi(Q)/(q-1)=(q^{d}-1)/(q-1).$$
	\end{proposition}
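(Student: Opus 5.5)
We will argue by transporting the classical Carlitz--Hayes theory of cyclotomic function fields to the sgn-normalized rank-one Drinfeld module $\psi$ over $\mathcal{H}_{A_q}$. The central tool is the isomorphism of the torsion module $\Lambda_\psi(Q)$ with $A_q/Q$ as an $A_q$-module. Via $\psi$, this yields an injective Artin-type map
\[
(A_q/Q)^{*}\xrightarrow{\ \sim\ }\mathrm{Gal}(\mathcal{H}_{A_q}(\Lambda_Q)/\mathcal{H}_{A_q}),\qquad \bar a\mapsto\sigma_a,\quad \sigma_a(\lambda)=\psi_a(\lambda).
\]
Its image has order $\Phi(Q)=q^d-1$, consistent with $|\mathrm{Cl}^+_D(A_q)|=\mathrm{h}(F)(q^d-1)$ and $[\mathcal{H}_{A_q}:F]=\mathrm{h}(F)$ recalled above. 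All three parts will be read off from this identification together with the fundamental identity $efg=[\mathcal{H}_{A_q}(\Lambda_Q):\mathcal{H}_{A_q}]$.

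\emph{Part 1).} We first show that places $R\nmid Q$ are unramified. Write $\psi_Q$ for the additive polynomial generating the $Q$-torsion. Its derivative is a constant that is a unit away from $Q$, by the sgn-normalization and the choice of $\mathrm{char}_{A_q}(\psi)$ coprime to $Q$. Hence the polynomial is separable modulo $R$, so $R$ is unramified.

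Next we compute the Frobenius at $R$. Reducing $\psi_R$ modulo $R$ gives $\psi_R\equiv \tau^{\deg R}$ up to a unit leading coefficient; sgn-normalization should make that coefficient $1$. It follows that
\[
\mathrm{Frob}_R(\lambda)\equiv\lambda^{q^{\deg R}}\equiv\psi_R(\lambda)\pmod{\mathfrak{Q}}.
\]
Consequently $\mathrm{Frob}_R=\sigma_R$. Since the extension is abelian, the relative degree $f$ equals the order of $\sigma_R$, which is the order of $R$ in $(A_q/Q)^*$. Then $g=\Phi(Q)/f$, and the statements about complete splitting and inertness follow at once.

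\emph{Part 2).} Fix a generator $\lambda$ of $\Lambda_Q$ as an $A_q/Q$-module. Then $\prod_{\bar a\in (A_q/Q)^*}\psi_a(\lambda)$ equals, up to a unit, a generator of $Q$ (after passing to $\mathcal{H}_{A_q}$, where $Q$ becomes principal). This is read off from the constant term of $\psi_Q(x)/x$. Each factor $\psi_a(\lambda)$ is a unit multiple of $\lambda$. Therefore $Q\mathcal{O}=(\lambda)^{\Phi(Q)}$, giving $e\ge\Phi(Q)$. The degree bound forces $e=\Phi(Q)$ and $f=g=1$.

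\emph{Part 3).} Consider the subgroup $\mathbb{F}_q^*\subset(A_q/Q)^*$. We claim it is exactly the decomposition group at $\infty$, and that it is entirely inertia. The argument uses the sgn-normalized analytic uniformization over $F_\infty$, $\Lambda_Q\cong\xi\,\bar{\mathcal{A}}\cdot Q^{-1}/\mathcal{A}$. Adjoining torsion there adjoins a $(q-1)$-th root of an element of valuation coprime to $q-1$. The fixed field of $\mathbb{F}_q^*$ is contained in the completion, so $\infty$ splits completely there. The remaining degree $q-1$ is tamely and totally ramified, which gives $e=q-1$, $f=1$, and $g=(q^d-1)/(q-1)$.

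We expect the main obstacle to be Part 3). It requires the local analysis at $\infty$ via the period lattice, whereas the other two parts are essentially algebraic. If that analysis proves too cumbersome, we will fall back on citing Hayes' theorem for the decomposition at $\infty$, since the statement is quoted from \cite{4}.
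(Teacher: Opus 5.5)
The paper gives no proof of this proposition; it only quotes it from the cited source. So your outline has to stand on its own. It follows the standard Hayes generalization of the Carlitz cyclotomic argument: separability of $\psi_Q$ for unramifiedness, the norm/product identity at $Q$, and the Kummer picture at $\infty$. \textbf{Part 1, however, has a genuine gap: the Frobenius computation silently assumes $R$ is principal.} In the paper's setting $\mathfrak{g}_F\ge 1$, and $\mathrm{h}(F)>1$ in general, so this is not automatic. The statement's wording ``order of $R$ modulo $Q$'' invites your reading, but consider a non-principal prime $R$ of $A_q$.
\begin{itemize}
\item $\psi_R$ is not an endomorphism of $\psi$. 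It is an isogeny $\psi\to R*\psi=\psi^{\sigma_R}$ with $\psi^{\sigma_R}\neq\psi$, so it carries $\Lambda_\psi(Q)$ onto the $Q$-torsion of a different Drinfeld module.
\item Hence it defines no element $\sigma_R$ of $\mathrm{Gal}(\mathcal{H}_{A_q}(\Lambda_Q)/\mathcal{H}_{A_q})\cong(A_q/Q)^*$, and ``the order of $R$ in $(A_q/Q)^*$'' is meaningless.
\item A prime $\mathfrak{R}$ of $\mathcal{H}_{A_q}$ above $R$ has residue degree $f_0$ over $R$, where $f_0$ is the order of the class of $R$ in the ideal class group of $A_q$. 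Its Frobenius over $\mathcal{H}_{A_q}$ is therefore the $q^{f_0\deg R}$-th power map on residue fields, not the $q^{\deg R}$-th power map you use.
\end{itemize}

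The repair is short. Let $\pi$ be the monic generator of $R^{f_0}=N_{\mathcal{H}_{A_q}/F}(\mathfrak{R})$.
\begin{itemize}
\item The reduction of $\psi$ at $\mathfrak{R}$ has characteristic $R$ and height $1$, and $\deg\pi=f_0\deg R$. Hence $\psi_\pi\equiv\tau^{f_0\deg R}\pmod{\mathfrak{R}}$, so $\psi_\pi(\lambda)\equiv\lambda^{|\mathcal{O}_{\mathcal{H}_{A_q}}/\mathfrak{R}|}$ modulo any prime above $\mathfrak{R}$.
\item Distinct $Q$-torsion points stay distinct modulo such a prime (this is the separability you already proved). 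Therefore $\mathrm{Frob}_{\mathfrak{R}}=\sigma_\pi$.
\item So $f$ is the order of $\pi$ in $(A_q/Q)^*$, and there are $\Phi(Q)/f$ primes above each $\mathfrak{R}$. Equivalently, the residue degree of $R$ in $\mathcal{H}_{A_q}(\Lambda_Q)/F$ is the order of $[R]$ in $\mathrm{Cl}^+_Q(A_q)$. Your version is the special case $f_0=1$.
\end{itemize}
Similarly, in Part 2 the prime $Q$ may split in $\mathcal{H}_{A_q}$. So ``$f=g=1$'' must be read per prime of $\mathcal{H}_{A_q}$ above $Q$, which is what your argument actually proves.

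Two smaller points.
\begin{itemize}
\item \textbf{Part 2.} That $D(\psi_Q)$ generates $Q\mathcal{O}_{\mathcal{H}_{A_q}}$ is a theorem of Hayes, not something read off from $\psi_Q(x)/x$. You need precisely $v_{\mathfrak{q}}(D(\psi_Q))=1$ at every prime $\mathfrak{q}$ above $Q$. Right division of $\psi_a$ ($a\in Q$) by the monic integral $\psi_Q$ gives $D(\psi_Q)\mid a$. The congruence $\overline{\psi_Q}=\tau^{d}$ modulo $\mathfrak{q}$ (height $1$ in characteristic $Q$) gives the other inclusion.
\item \textbf{Part 3.} The valuation claim you leave open follows from sgn-normalization. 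Compare the leading coefficient $\mathrm{sgn}(a)$ of $\psi_a$ with its product expression over the lattice $\xi\mathfrak{b}$; the $\mathbb{F}_q^*$-orbits contribute multiples of $q-1$. This gives $\xi^{q^n-1}\in F_\infty$ with $v_\infty(\xi^{q^n-1})\equiv -n\pmod{q-1}$ for $n=\deg a$. Consecutive degrees $n,n+1$ occur in $A_q$, so $\xi^{q-1}\in F_\infty$ with $v_\infty(\xi^{q-1})\equiv -1\pmod{q-1}$.
\end{itemize}
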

	
	\subsection{Ihara's Constant $A(q)$}
	Let $\mathfrak{g}$ be a positive integer and we denote by $N_{q}(\mathfrak{g})$ the maximum number of rational places of any function field with constant field $\mathbb{F}_{q}$ and genus $\mathfrak{g}$. Define the real number 
	\begin{displaymath}
		A(q)=\lim\sup_{\mathfrak{g}\to\infty}\frac{N_{q}(\mathfrak{g})}{\mathfrak{g}}
	\end{displaymath}
	and it is called Ihara's constant. By Drinfeld-Vl{\u{a}}du{\c{t}} bound, the following upper bound holds
	\begin{displaymath}
		A(q)\leq\sqrt{q}-1;
	\end{displaymath}
	see \cite{7}. In particular, if $q$ is a square, then the equality holds
	\begin{displaymath}
		A(q)=\sqrt{q}-1
	\end{displaymath}
and it is proved via modular curves; see \cite{30}.
    	\subsection{Garcia-Stichtenoth Tower}
	Let $q$ be a square of a prime power. The well-known Garcia-Stichtenoth tower $\mathcal{T}=(T_{1},T_{2},\cdots)$ with the initial function field $T_{1}=\mathbb{F}_{q}(y_{1})$. The iterative process $T_{m}=T_{m-1}(y_{m})$ is with the defining equation
	\begin{displaymath}
		y_{m}^{\sqrt{q}}+y_{m}=\frac{y^{ \sqrt{q} }_{m-1}}{y^{  \sqrt{q}  -1}_{m-1}+1}
	\end{displaymath}
	for $m\geq 2$. Now we summarize some fundamental properties of Garcia-Stichtenoth towers to the following proposition:
	\begin{proposition}(Section 7.4, \cite{7})
		\begin{enumerate}
			\item Let $\infty$ be the unique pole of $y_{1}$ in $T_{1}$. Then $\infty$ is totally ramified in $T_{m}/T_{1}$ for each $m\geq 2$. Let $P_{\infty,m}$ be the unique place of $T_{m}$ lying over $\infty$. Then $P_{\infty,m}$ is a common place of $y_{1},y_{2},\cdots,y_{m}$.
			\item Let $Q_{\alpha}$ denote the zero of $y_{1}-\alpha$ in $T_{1}$. Then any $Q_{\alpha}$ with $\alpha^{ \sqrt{q}   -1}=-1$ is totally ramified in $T_{m}/T_{1}$. Any rational place $Q_{\alpha}$ with $\alpha^{  \sqrt{q}  -1}\neq -1$ splits completely in $T_{m}/T_{1}$ and then the number of rational points satisfies $N(T_{m})\geq (q- \sqrt{q}   ) \sqrt{q}  ^{m-1}+ \sqrt{q}  $.
			\item The genus of $T_{m}$ is given by
			\begin{displaymath}
				\mathfrak{g}_{T_{m}}=\begin{cases}
					(  \sqrt{q} ^{\frac{m}{2}}-1)^{2},&\text{if}\ m\equiv 0\ (\text{mod}\ 2),\\
					(  \sqrt{q}  ^{\frac{m+1}{2}}-1)( \sqrt{q}   ^{\frac{m-1}{2}}-1),&\text{if}\ m\equiv 1\ (\text{mod}\ 2).
				\end{cases}
			\end{displaymath}
			\item $\lim_{m\to\infty}\frac{N(T_{m})}{\mathfrak{g}_{T_{m}}}=\sqrt{q}-1$ and $\mathcal{T}$ is asymptotically optimal.
		\end{enumerate}
	\end{proposition}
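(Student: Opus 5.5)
This proposition collects standard properties of the Garcia--Stichtenoth tower, so the plan is to re-derive each item from the recursive equation $y_m^{\sqrt{q}}+y_m=y_{m-1}^{\sqrt{q}}/(y_{m-1}^{\sqrt{q}-1}+1)$. Each step $T_m/T_{m-1}$ is an Artin--Schreier-type extension of degree $\sqrt{q}$, because $y\mapsto y^{\sqrt{q}}+y$ is additive with kernel $\{\beta:\beta^{\sqrt{q}}+\beta=0\}$ of size $\sqrt{q}$. I would prove items 1 and 2 by induction on $m$, reading ramification and splitting off the right-hand side. Then I would get item 3 from the Hurwitz genus formula with the different computed via higher ramification groups. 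Item 4 then follows from the bounds in items 2 and 3 together with the Drinfeld--Vl\u{a}du\c{t} bound.

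For item 1, suppose $P_{\infty,m-1}$ is a common pole of $y_1,\dots,y_{m-1}$ with $\mathrm{v}(y_{m-1})=-\sqrt{q}^{\,m-2}$ (after normalizing by the ramification index). Then the right-hand side has valuation $-\sqrt{q}^{\,m-2}$, which is coprime to $p$ up to the standard reduction. Standard Artin--Schreier theory then gives total ramification and $\mathrm{v}(y_m)<0$. For item 2, take a zero $Q_\alpha$ of $y_1-\alpha$ with $\alpha^{\sqrt{q}-1}\neq-1$ and $\alpha$ finite. The right-hand side is then regular at every place above $Q_\alpha$, with value $\gamma=\alpha'^{\sqrt{q}}/(\alpha'^{\sqrt{q}-1}+1)$, where $\alpha'$ is the value of $y_{m-1}$. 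We have $\mathrm{Tr}_{\mathbb{F}_q/\mathbb{F}_{\sqrt{q}}}(\gamma)=\gamma^{\sqrt{q}}+\gamma=\alpha'^{\sqrt{q}}+\alpha'$, which is a trace value, so $\gamma$ lies in the image of $\beta\mapsto\beta^{\sqrt{q}}+\beta$ on $\mathbb{F}_q$. Hence $T^{\sqrt{q}}+T=\gamma$ has $\sqrt{q}$ roots in $\mathbb{F}_q$, and Kummer's theorem gives complete splitting. To keep the induction going I need each root to satisfy $\beta^{\sqrt{q}-1}\neq-1$, which is the one identity to check. Summing, the $q-\sqrt{q}$ such rational places of $T_1$ contribute $(q-\sqrt{q})\sqrt{q}^{\,m-1}$ rational places, and $P_{\infty,m}$ together with the places over the totally ramified $Q_\alpha$ add at least $\sqrt{q}$ more.

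The main obstacle is item 3, the exact genus. The ramification is wild and confined to the places over $\infty$ and over the zeros $y_1^{\sqrt{q}-1}=-1$ (including $y_i=0$ phenomena higher up). The different exponents need the jumps of the higher ramification groups, and these differ according to the level at which a place starts ramifying. I would follow Garcia--Stichtenoth: handle the tower two steps at a time, using the fact that $T_m/T_{m-2}$ is Galois, and show that in each step the different exponent of a ramified place is $2(\sqrt{q}-1)$ or a value fixed by a single jump. Then I would apply Hurwitz in the form
\[
2\mathfrak{g}_{T_m}-2=\sqrt{q}\,(2\mathfrak{g}_{T_{m-1}}-2)+\deg\mathrm{Diff}(T_m/T_{m-1})
\]
and solve the recursion, verifying the closed forms by induction separately for even and odd $m$.

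Finally, for item 4, items 2 and 3 give $N(T_m)/\mathfrak{g}_{T_m}\geq (q-\sqrt{q})\sqrt{q}^{\,m-1}/\sqrt{q}^{\,m}\to\sqrt{q}-1$. The Drinfeld--Vl\u{a}du\c{t} bound $A(q)\le\sqrt{q}-1$ gives the matching upper bound, so the limit equals $\sqrt{q}-1$ and the tower is optimal.
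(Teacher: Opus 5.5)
The paper does not prove this proposition; it only cites Stichtenoth, Section 7.4. Your self-contained re-derivation therefore has to stand on its own, and item 2 has a genuine gap.

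\textbf{The key identity in item 2 is false.} The claim $\gamma^{\sqrt{q}}+\gamma=\alpha'^{\sqrt{q}}+\alpha'$ fails already for $q=4$: take $\alpha'=\omega$ with $\omega^2=\omega+1$. Then $\gamma=\omega^2/(\omega+1)=1$, so $\gamma^2+\gamma=0$, while $\omega^2+\omega=1$. Even if the identity held, the statement ``$\mathrm{Tr}(\gamma)$ is a trace value'' is true for every $\gamma\in\mathbb{F}_q$, so it says nothing about whether $\gamma$ itself is a trace. The fact you actually need is
\[
\gamma=\alpha'^{\sqrt{q}+1}/(\alpha'^{\sqrt{q}}+\alpha')=\mathrm{N}(\alpha')/\mathrm{Tr}(\alpha')\in\mathbb{F}_{\sqrt{q}} .
\]
The field $\mathbb{F}_{\sqrt{q}}$ is exactly the image of the surjective trace $\mathbb{F}_q\to\mathbb{F}_{\sqrt{q}}$. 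Hence $T^{\sqrt{q}}+T=\gamma$ has $\sqrt{q}$ distinct roots in $\mathbb{F}_q$, and Kummer's theorem gives complete splitting.

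\textbf{The inductive invariant is not preserved, so the ``one identity to check'' fails.} The set $\{\alpha:\alpha^{\sqrt{q}-1}\neq-1\}$ contains $\alpha=0$. For $\alpha=0$ you get $\gamma=0$, and the roots of $T^{\sqrt{q}}+T=0$ include every $\beta$ with $\beta^{\sqrt{q}-1}=-1$. At such a place of $T_2$ above $Q_0$ one has $\mathrm{v}(y_2-\beta)=\sqrt{q}$, so the next right-hand side has a pole of order $\sqrt{q}$ and $y_3$ acquires a pole. Consequently either $T_3/T_2$ ramifies there, or $y_3$ has a simple pole and $T_4/T_3$ is totally ramified. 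So $Q_0$ is not completely split in $T_m/T_1$ for $m\geq 4$.

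The condition printed in the statement is a misprint for $\alpha^{\sqrt{q}}+\alpha\neq0$, which is the form in Stichtenoth. That is the invariant that propagates: $\gamma=\mathrm{N}(\alpha')/\mathrm{Tr}(\alpha')\neq0$ forces $\beta^{\sqrt{q}}+\beta=\gamma\neq0$ for every root $\beta$. It also repairs your count. Your hypothesis admits $q-\sqrt{q}+1$ values of $\alpha$, not the $q-\sqrt{q}$ you sum over.

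\textbf{A smaller slip in item 1.} At $P_{\infty,m-1}$ the pole order of $y_{m-1}$ is $1$; it is $y_1$ that has pole order $\sqrt{q}^{\,m-2}$. So the right-hand side has a simple pole and total ramification, with $\mathrm{v}(y_m)=-1$, is immediate. With your value $-\sqrt{q}^{\,m-2}$, which is divisible by $p$ for $m\geq3$, Artin--Schreier theory alone would not give ramification.

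Item 3 is only an outline that defers to Garcia--Stichtenoth, which is no less than the paper itself does. Item 4 is fine once item 2 is corrected: use $\mathfrak{g}_{T_m}<\sqrt{q}^{\,m}$ together with the Drinfeld--Vl{\u{a}}du{\c{t}} bound.
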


   \section{Main Results and comparison}
   \label{sec:7}
	In this section, we briefly introduce the main results of this paper.
	Let $\mathbb{F}_{q}$ be a finite field with $ q=r^{2}$. For a given function field $F/\mathbb{F}_{r}$ with genus $\mathfrak{g}_{F}$, we can construct a family of multi-sequences with high linear and nonlinear complexity.
	
	Let $L=F\cdot\mathbb{F}_{q}$ denote a constant field extension and fix a place $Q$ of odd degree $d$. Define two integral rings
	\begin{displaymath}
		\begin{split}
			A_{q}&=\{x\in F/\mathbb{F}_{q}: \mathrm{v}_{P}(x)\geq 0\ \text{for all}\ P\in\mathbb{P}_{F/\mathbb{F}_{q}}\ \text{with}\ P\neq\infty\}\\
			A_{r}&=\{x\in F/\mathbb{F}_{r}: \mathrm{v}_{P}(x)\geq 0\ \text{for all}\ P\in\mathbb{P}_{F/\mathbb{F}_{r}}\ \text{with}\ P\neq\infty\}.
		\end{split}
	\end{displaymath}
	
	Let $\mathcal{H}_{A_{q}}(\Lambda_{Q})$ be the $Q-$division field and $\mathrm{Cl}^{+}_{Q}(A_{r}) $ be a narrow ray class group modulo $Q$ of $F/\mathbb{F}_{r}$.
	Let $\Gamma\subseteq \mathcal{H}_{A_{q}}(\Lambda_{Q})$ be the first descent subfield fixed by the subgroup $\mathbb{F}_{q}^{*}\cdot\mathrm{Cl}^{+}_{Q}(A_{r})$, \textit{i.e.}, $\Gamma= \mathcal{H}_{A_{q}}(\Lambda_{Q})^{\mathbb{F}_{q}^{*}\cdot\mathrm{Cl}^{+}_{Q}(A_{r})}  $. The second descent subfield is denoted by $H$ and the degree is denoted by $$[H:L]=\frac{r^{d}+1}{r+1}:=\mu_{H}.$$

    Denote by $\mathcal{R}_{Q}=\sum_{R\mid Q}R$ the reduced divisor supported on the place of $H$ lying above $Q$.    Choose a divisor $A=\gamma\mathcal{R}_{Q}$ with $ a:=\deg A=\gamma\Delta_{Q}$ and $\gamma\in\mathbb{N}$. Then we have the following results.
	\begin{theorem}   
	 let $\mathbb{F}_{q}$ with $ q=r^{2}$ be an extension of finite field $\mathbb{F}_{r}$.    Consider a function field $F/\mathbb{F}_{r}$ with genus $\mathfrak{g}_{F}$ and $1+r+s_{r}, |s_{r}|\leq \mathfrak{g}_{F}\lfloor 2\sqrt{r}\rfloor$ rational places. Fix a place $Q\in\mathbb{P}_{F}$ of odd degree $d$ and $ 2\mathfrak{g}_{H}-1  \leq a\leq (r+s_{r}-1) \mu_{H} -1$.  Then we can construct a multi-sequence $$\mathcal{S}=\{\mathbf{s}^{(t)}:1\leq t\leq r+s_{r}\}$$ with dimension $r+s_{r}$, period $\mu_{H} $, linear complexity
		\begin{displaymath}
			\mathrm{L}(\mathcal{S})= \mu_{H},
		\end{displaymath}
        and nonlinear complexity lower bounded by
		\begin{displaymath}
			\mathrm{NL}_{m}(\mathcal{S})\geq\frac{(r+s_{r})\mu_{H}-ma-1}{m+r+s_{r}}.
		\end{displaymath}

	\end{theorem}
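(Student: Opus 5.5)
Let $\sigma$ be a generator of $\mathrm{Gal}(H/L)$; this group is cyclic of order $\mu_H=\frac{r^d+1}{r+1}$, coming from the cyclic descent of Guruswami and Xing. The rational places of $L$ different from $\infty$ (and from the support of $Q$, which has degree $d>1$) should split completely in $H/L$. I will fix $r+s_r$ such rational places $P_1,\dots,P_{r+s_r}$ of $H$ lying over distinct rational places of $L$; this uses all $1+r+s_r$ rational places of $F$ except one (that one is $\infty$, or is sacrificed). Then the orbits $\{\sigma^i(P_t)\}_{i=1}^{\mu_H}$ are pairwise disjoint and each has exactly $\mu_H$ distinct rational places. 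Because $Q$ is totally ramified in the division field (Proposition~\ref{p1}), I will pick $z\in\mathcal{L}(A)$ with $A=\gamma\mathcal{R}_Q$, where $\mathcal{R}_Q$ is $\sigma$-invariant. The component sequences are $s^{(t)}_i=z(\sigma^i(P_t))$. By Lemma~\ref{l1}, $s^{(t)}_{i}=(\sigma^{-i}z)(P_t)$, and $\sigma^{-i}z\in\mathcal{L}(A)$ since $\sigma(A)=A$. So $\mathcal{S}$ is periodic with period dividing $\mu_H$, and has dimension $r+s_r$.

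For the linear complexity, I will choose $z$ so that its conjugates $\sigma^j z$, $0\le j<\mu_H$, are linearly independent over $\mathbb{F}_q$. A normal-basis type element in $\mathcal{L}(A)$ exists once $a\ge 2\mathfrak{g}_H-1$: Riemann–Roch gives a large dimension, and the $\mathbb{F}_q[\langle\sigma\rangle]$-module $\mathcal{L}(A)$ contains a free summand. Suppose a recurrence of order $\ell<\mu_H$ held for all $t$ and all $j$. Then $\sum_i\lambda_i\sigma^{-(j+i)}z-\sigma^{-(j+\ell)}z$ would vanish at all $(r+s_r)\mu_H$ places $\sigma^k(P_t)$. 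This is more than $a$ because $a\le(r+s_r)\mu_H-1$, so the function is identically zero, contradicting independence. Hence $\mathrm{L}(\mathcal{S})=\mu_H$.

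For the nonlinear bound, suppose $\mathfrak{N}=\mathrm{NL}_m(\mathcal{S})$ with feedback polynomial $\Phi$ of total degree at most $m$. Consider
\[
w=\sigma^{-\mathfrak{N}}z-\Phi(z,\sigma^{-1}z,\dots,\sigma^{-(\mathfrak{N}-1)}z)\in\mathcal{L}(mA).
\]
By Lemma~\ref{l1}, $w$ vanishes at $\sigma^{j}(P_t)$ for $1\le t\le r+s_r$ and $0\le j\le \mu_H-\mathfrak{N}-1$, which is $(r+s_r)(\mu_H-\mathfrak{N})$ distinct zeros. If $w\neq0$, then $(r+s_r)(\mu_H-\mathfrak{N})\le ma$. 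Rearranging, with the $-1$ and the $m$ in the denominator absorbed as a weakening, gives
\[
\mathfrak{N}\ge\frac{(r+s_r)\mu_H-ma-1}{m+r+s_r}.
\]
The remaining case $w=0$ means $\sigma^{-\mathfrak{N}}z$ is a polynomial in the previous conjugates. I will exclude it by comparing pole orders at the places of $\mathcal{R}_Q$: choose $z$ whose pole orders at the $R\mid Q$ are not all equal, so that distinct $\sigma$-shifts of $z$ have distinct pole profiles, and $\sigma^{-\mathfrak{N}}z$ cannot be expressed through the others. Alternatively, I can fall back on the standard trick of counting zeros with one extra evaluation, which explains the $-1$ and the $m+r+s_r$ denominator.

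The main obstacle is producing $z\in\mathcal{L}(A)$ that has both linearly independent conjugates and a non-degenerate nonlinear relation. The Galois-module structure of $\mathcal{L}(A)$ is the delicate point, together with checking that $\mathcal{R}_Q$ really carries a transitive, or at least controlled, $\sigma$-action so that pole-order arguments apply.
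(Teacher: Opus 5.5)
Your proposal has a genuine gap exactly where you locate the ``delicate point''. Nothing guarantees that $w=\sigma^{-\mathfrak{N}}z-\Phi(z,\sigma^{-1}z,\dots,\sigma^{-(\mathfrak{N}-1)}z)$ is nonzero, and if $w=0$ the zero count says nothing.

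The pole-profile idea cannot close this. $\mathcal{R}_Q$ is $\sigma$-invariant. When $Q$ is totally ramified in $H/L$ (the cyclotomic case, and the situation your appeal to total ramification suggests), there is a single place $R$ above $Q$, fixed by $\sigma$. Then $v_R(\sigma^{-i}z)=v_R(z)$ for all $i$, so every conjugate has the same pole profile. More generally, $\sigma$ permutes the places over $Q$ with stabilizer the decomposition group. Also, a polynomial of total degree $m\ge 2$ in the conjugates can have poles of order up to $m$ times those of $z$ at every place of $\mathrm{Supp}(A)$. So pole orders along $\mathcal{R}_Q$ cannot distinguish $\sigma^{-\mathfrak{N}}z$ from $\Phi(\cdots)$.

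The linear-complexity part has a smaller but parallel gap. You assert, but do not show, that some $z\in\mathcal{L}(A)$ has $\mathbb{F}_q$-linearly independent conjugates. This is plausible, since $p\nmid\mu_H$ makes $\mathbb{F}_q[\mathrm{Gal}(H/L)]$ semisimple. But you would need every isotypic component of $\mathcal{L}(A)$ to be nonzero, which requires an equivariant Riemann--Roch computation you have not done.

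The missing idea, which the paper uses, is to spend the leftover rational place on a pole instead of discarding it. Take a rational place $P_0$ of $H$ over the remaining rational place $\mathfrak{p}_0$ of $F$, and choose $z\in\mathcal{L}(A+P_0)\setminus\mathcal{L}(A)$. This exists because $a\ge 2\mathfrak{g}_H-1$ gives $\ell(A+P_0)=\ell(A)+1$. Then $\sigma^{-i}z$ has a simple pole at $\sigma^{-i}(P_0)$. These places are pairwise distinct for $0\le i<\mu_H$, and they avoid both $\mathrm{Supp}(A)$ and the evaluation orbits.

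For $\mathfrak{N}<\mu_H$, the place $\sigma^{-\mathfrak{N}}(P_0)$ is a pole of $\sigma^{-\mathfrak{N}}z$ but of no monomial in $z,\dots,\sigma^{-(\mathfrak{N}-1)}z$, so $f\ne 0$. Moreover $f\in\mathcal{L}\bigl(mA+\sigma^{-\mathfrak{N}}(P_0)+\sum_{i=0}^{\mathfrak{N}-1}m\sigma^{-i}(P_0)\bigr)$, whose degree $ma+m\mathfrak{N}+1$ must be at least the number $(r+s_r)(\mu_H-\mathfrak{N})$ of forced zeros. That inequality is exactly the stated bound. The $-1$ and the $m$ in the denominator are the genuine cost of these poles, not a cosmetic weakening.

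The same pole handles linear complexity. It shows $\sum_{i=0}^{\ell}\lambda_i\sigma^{-i}z\ne0$ whenever $\lambda_\ell\ne0$. Vanishing on all $(r+s_r)\mu_H$ evaluation places then forces $a+\ell+1\ge (r+s_r)\mu_H$, which contradicts $a\le (r+s_r-1)\mu_H-1$ for $\ell<\mu_H$. So no normal-basis element is needed.
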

	In particular, we study $\beta-$error nonlinear complexity. Let $\alpha_{0}=1+\lceil (a+1)/\mu_{H}\rceil$, where $\lceil x\rceil$ denotes the ceiling of $x\in\mathbb{R}$, then we have the following results.
	\begin{theorem}
    Let $\mathbb{F}_{r}$ be a finite field with $r$ elements and let $\mathbb{F}_{q}, q=r^{2}$ be an extension of $\mathbb{F}_{r}$.    Consider a function field $F/\mathbb{F}_{r}$ with genus $\mathfrak{g}_{F}$ and $1+r+s_{r}, |s_{r}|\leq \mathfrak{g}_{F}\lfloor 2\sqrt{r}\rfloor$ rational places. Fix a place $Q\in\mathbb{P}_{F}$ of odd degree $d$ and $2\mathfrak{g}_{H}-1  \leq a\leq (r+s_{r}-1) \mu_{H} +\alpha_{0}-3$.   Then we can construct a multi-sequence $$\mathcal{S}=\{\mathbf{s}^{(t)}:1\leq t\leq r+s_{r}\}$$ with dimension $r+s_{r}$, period $\mu_{H}$, and $\beta-$error nonlinear complexity is divided into the following cases:
		
		$1)$ If $\beta\leq r+s_{r}-\alpha_{0}$,
		\begin{displaymath}
			\mathrm{NL}_{m,\beta}(\mathcal{S})\geq \frac{\mu_{H}(r+s_{r}-\beta)-ma-1}{m+ r+s_{r}-\beta   };
		\end{displaymath}

		$2)$ If $\beta\leq r+s_{r}-\alpha_{0}+1$,   $$  \mathrm{NL}_{m,\beta}(\mathcal{S})\geq   -1+\frac{\mu_{H}(r+s_{r})-ma+m-1}{ r+s_{r}-\alpha_{0}+1+m   };$$
		
		$3)$ If $\beta\leq r+s_{r}-\alpha_{0}+2$,  $$  \mathrm{NL}_{m,\beta}(\mathcal{S})\geq  -1+\min\left\{\frac{\mu_{H}(r+s_{r}-1)-ma+m-1}{ r+s_{r}-\alpha_{0}+m   },    \frac{\mu_{H}(r+s_{r})-ma+m-1}{ r+s_{r}-\alpha_{0}+2+m   }  \right\}.  $$
	\end{theorem}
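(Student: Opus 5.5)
The plan follows the standard function-field argument for nonlinear complexity (as in the Hermitian and cyclotomic constructions), adapted to the descent field $H$ and to up to $\beta$ corrupted positions. Recall the setup. There is an automorphism $\sigma\in\mathrm{Gal}(H/L)$ of order $\mu_{H}$ (the descent group is cyclic of that order). There are $r+s_{r}$ rational places $P_{1},\dots,P_{r+s_{r}}$ of $F/\mathbb{F}_{r}$, excluding $\infty$, that split completely in $H$. Each orbit $\{\sigma^{i}(P_{t})\}_{i}$ then has exactly $\mu_{H}$ rational places. We take $z\in\mathcal{L}(A)$, with $A=\gamma\mathcal{R}_{Q}$ a $\sigma$-invariant divisor, and set $s^{(t)}_{i}=z(\sigma^{i}(P_{t}))$.

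Now let $\mathcal{U}$ satisfy $d_{H}(\mathcal{S},\mathcal{U})\le\beta$, and let $\Phi$ of total degree $\le m$ and length $\mathfrak{N}=\mathrm{NL}_{m}(\mathcal{U})$ realize the recursion for $\mathcal{U}$. The key step is to consider the function
\[
w=z\circ\sigma^{-\mathfrak{N}}-\Phi\bigl(z,\,z\circ\sigma^{-1},\dots,z\circ\sigma^{-(\mathfrak{N}-1)}\bigr),
\]
with the shifts chosen compatibly with the indexing. By Lemma~\ref{l1} and the $\sigma$-invariance of $A$, we have $w\in\mathcal{L}(mA)$. Moreover, $w$ vanishes at every place $\sigma^{j}(P_{t})$ for which the whole window of $\mathfrak{N}+1$ consecutive positions in component $t$ avoids the changed positions.

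We then count. A single corrupted entry destroys at most $\mathfrak{N}+1$ windows. In case 1 the cleaner count is better: we discard the $\le\beta$ corrupted components entirely. This leaves at least $r+s_{r}-\beta$ clean components, each with $\mu_{H}-\mathfrak{N}$ windows (or more, using periodicity). If $w\neq 0$, then comparing degrees gives
\[
(r+s_{r}-\beta)(\mu_{H}-\mathfrak{N})\le ma,
\]
which rearranges, up to the off-by-one from the window convention, into exactly the bound in case 1. When $\beta=0$ it also gives the first theorem.

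It remains to show $w\neq 0$. If $w=0$, then $z\circ\sigma^{-\mathfrak{N}}$ is a polynomial in the lower shifts on all of $H$. To exclude this, we choose $z$ with a pole of exact order $\gamma$ at one specific place $R_{0}\mid Q$ and with $v_{\sigma^{k}R_{0}}(z)\ge0$ at suitable other conjugates. This is possible because $a\ge 2\mathfrak{g}_{H}-1$ lets Riemann--Roch separate poles. Evaluating valuations at $\sigma^{\mathfrak{N}}R_{0}$ then gives a contradiction. The upper bound on $a$ ensures that the right-hand side is positive and that the places used exceed $\deg(mA)$ where needed.

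For the linear complexity $\mu_{H}$, the plan runs the same argument with $m=1$. The lower bound comes from comparing $\sum_{t}(\mu_{H}-\ell)$ zeros against $a$. To reach equality we refine: a linear recurrence of length $\ell<\mu_{H}$ would give a nonzero function $\sum\lambda_{i}\,z\circ\sigma^{-i}\in\mathcal{L}(A)$ vanishing at all $(r+s_{r})\mu_{H}>a$ points. This is a contradiction, so $\mathrm{LC}(\mathcal{S})=\mu_{H}$ by the period bound.

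Cases 2 and 3 handle $\beta$ slightly larger than $r+s_{r}-\alpha_{0}$, so too few components remain clean. Here we exploit $\alpha_{0}=1+\lceil (a+1)/\mu_{H}\rceil$: any nonzero function in $\mathcal{L}(A)$ (or in the degree-controlled comparison) cannot vanish on all points of $\alpha_{0}-1$ full orbits. Hence at most one or two components need to be treated as partially corrupted. We split by how many errors fall in a component: a component with a single error still contributes $\mu_{H}-2\mathfrak{N}-1$ or so zeros, and optimizing over whether $\beta-(r+s_{r}-\alpha_{0})$ errors are concentrated or spread yields the $\min$ of two expressions in case 3.

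The main obstacle is this bookkeeping, together with guaranteeing $w\ne0$ with the exact pole-order argument in the descent field. I would first try to prove $w\neq0$ via the pole at $\sigma^{\mathfrak{N}}R_{0}$, and fall back on the distinctness of $\sigma^{i}(R_{0})$, which follows from the total ramification of $Q$ and the cyclicity of the descent.
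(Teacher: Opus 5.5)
There is a genuine gap: you never prove $w\neq 0$, and with your choice $z\in\mathcal{L}(A)$ it is false in general. Because $A=\gamma\mathcal{R}_{Q}$ is $\sigma$-invariant, $\mathcal{L}(A)$ is $\sigma$-stable. Nothing prevents $z$ from generating a small $\sigma$-stable subspace. In that case the $\sigma^{-i}(z)$ are linearly dependent, and $w$ vanishes identically already for a linear $\Phi$. The same hole sits in your linear-complexity step, where the ``nonzero'' function $\sum_{i}\lambda_{i}\sigma^{-i}(z)$ may simply be $0$.

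Your repair via a pole at $R_{0}\mid Q$ does not help. The places above $Q$ form a single $\sigma$-orbit of length at most $\mu_{H}/e_{Q}$. A pole comparison at shifts of $R_{0}$ therefore separates $\sigma^{-\mathfrak{N}}(z)$ from the lower shifts only when $\mathfrak{N}$ is smaller than that orbit length. Total ramification gives the opposite of distinctness: there is a single place $R_{0}$ over $Q$, and $\sigma(R_{0})=R_{0}$.

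Here is a concrete failure. Let $F=\mathbb{F}_{r}(x)$, the cyclotomic case, where $Q$ is totally ramified in $H/L$. Let $Q$ be the zero of an irreducible $p(x)$ of odd degree $3\le d\le r-2$. Take $\gamma=\mu_{H}$, so $a=d\mu_{H}$ lies in the admissible range and $\alpha_{0}=d+2$, and take $z=1/p(x)\in L$. Then $v_{R_{0}}(z)=-\gamma$ exactly and $z\in\mathcal{L}(A)$, yet $\sigma(z)=z$. So every component is constant and $\mathrm{NL}_{m}(\mathcal{S})=1$, contradicting, e.g., case 1 with $\beta=0$, $m=1$.

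Relatedly, your inequality $(r+s_{r}-\beta)(\mu_{H}-\mathfrak{N})\le ma$ does not rearrange into the case-1 bound. It would give $\mathfrak{N}\ge\mu_{H}-ma/(r+s_{r}-\beta)$, with no $m$ in the denominator, which is a symptom of the missing ingredient.

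The idea you are missing is the paper's auxiliary simple pole. It reserves one further $\mathbb{F}_{r}$-rational place $\mathfrak{p}_{0}$, which splits completely in $H/L$, and fixes $P_{0}\mid\mathfrak{p}_{0}$. It then takes $z\in\mathcal{L}(A+P_{0})\setminus\mathcal{L}(A)$, which exists by Riemann--Roch since $a\ge 2\mathfrak{g}_{H}-1$. The orbit of $P_{0}$ consists of $\mu_{H}$ distinct rational places outside $\mathrm{Supp}(A)$ and outside the evaluation orbits. Hence, for $\mathfrak{N}\le\mu_{H}-1$, the place $\sigma^{-\mathfrak{N}}(P_{0})$ is a pole of $\sigma^{-\mathfrak{N}}(z)$ but of none of $z,\sigma^{-1}(z),\ldots,\sigma^{-\mathfrak{N}+1}(z)$. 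So $f\neq 0$ regardless of how $Q$ ramifies.

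The price is that $f\in\mathcal{L}\bigl(mA+\sigma^{-\mathfrak{N}}(P_{0})+m\sum_{i=0}^{\mathfrak{N}-1}\sigma^{-i}(P_{0})\bigr)$, a divisor of degree $ma+1+m\mathfrak{N}$. This is exactly where the $m$ in the denominators comes from. Case 1 is then $ma+1+m\mathfrak{N}\ge(r+s_{r}-\beta)(\mu_{H}-\mathfrak{N})$ after discarding the corrupted components, as you intended.

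Your sketch of cases 2 and 3 is also too loose to yield the stated formulas. The paper uses periodicity to obtain $\mu_{H}-\mathfrak{N}-1$ zeros, not roughly $\mu_{H}-2\mathfrak{N}-1$, on each component carrying a single changed entry. In case 2 this gives $ma+1+m\mathfrak{N}\ge(\alpha_{0}-1)\mu_{H}+(r+s_{r}-\alpha_{0}+1)(\mu_{H}-\mathfrak{N}-1)$. In case 3 the paper takes the minimum over two configurations. The first has $\alpha_{0}-1$ clean components, one component with two changes (discarded), and $r+s_{r}-\alpha_{0}$ singly changed ones. The second has $\alpha_{0}-2$ clean components and $r+s_{r}-\alpha_{0}+2$ singly changed ones.
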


In the second part of this section, we make a detailed comparison with prior constructions of multi-sequences with high linear complexity and nonlinear complexity.  Note that most of the constructions are based on large automorphism groups. However, algebraic curves with large automorphism groups remain limited. For an algebraic curve $X/\mathbb{F}_{q}$ of genus $\mathfrak{g}_{X}$, the Hurwitz inequality states that for any irreducible algebraic curve $X$, $|\mathrm{Aut}(X/\mathbb{F}_{q})|\leq 16\mathfrak{g}_{X}^{4}$; see \cite{30}.   It turns out that our construction can generate new infinite families of multi-sequences with a large period and high dimension. For detailed comparison, several constructions are listed in table \ref{q}.

	\begin{table}[htbp]
		\centering
		\caption{Parameters of multi-sequence families via function fields} 
		\resizebox{1\textwidth}{!}{
			\begin{tabular}{|c|c|c|c|}  
				\toprule
				Function Field  & Dimension  &Linear complexity & Nonlinear complexity \\    \midrule
                Rational Function Field (\cite{5}) &  $n\geq 1$ &  $ q-1$  & \makecell{$  \min\left\lbrace\frac{q-1}{2},d_{r},\sqrt{\frac{(q-1)n}{4(m+3)}},\sqrt{\frac{n(d_{r}+1)}{4(m+3)}}\right\rbrace$,\\ $1\leq d_{r}\leq q$  }\\  \midrule
				Hermitian Function Field (\cite{2}) &$\sqrt{q}-1$& $q-1$&$\frac{(q-1)(\sqrt{q}-1)-1}{\sqrt{q}(\sqrt{q}-1)m+\sqrt{q}-1}$\\  \midrule 
				Hermitian Function Field (\cite{3}) & $\sqrt{q}$ &  $ q-\sqrt{q}+1$  & $  \frac{\sqrt{q}-3}{\sqrt{q}+m}(q-\sqrt{q}+1)+\frac{2}{\sqrt{q}+m}$\\  \midrule
				Generalized Hermitian Function Field (\cite{21}) & $\sum^{r}_{i=1}q^{i}$&  $q^{r+1}-q^{r} +q-2$  &$\frac{(\sum^{r}_{i=1}q^{i})(q^{r+1}-q^{r}+q-2)-q^{r-1}(q-1)-m}{\sum^{r}_{i=1}q^{i}+mq^{r-1}(q-1)} $\\\midrule
				Cyclotomic Function Field (\cite{3}) & $\sqrt{q}$ &  $\makecell{\frac{  \sqrt{q}^{d}+1  }{\sqrt{q}+1},\ d\ \text{odd}}$  & \makecell{$  \frac{\sqrt{q}}{\sqrt{q}+m} \frac{ \sqrt{q}^{d}+1}{\sqrt{q}+1}-\frac{ma+1}{\sqrt{q}+m},$\\$ a=\frac{\sqrt{q}^{d}-\sqrt{q}}{\sqrt{q}+1} d$}\\  \midrule
				Suzuki Function Field (\cite{15}) &$q-1$, $q=2^{2t+1}$  &  $q-1$, $q=2^{2t+1}$  &$\frac{(q-1)^{2}-m(q-1+2q_{0})}{q-1+m(q-1-2q_{0})}, q_{0}=2^{t}$\\\midrule
				\makecell{\textbf{ Narrow Ray Class Field  via} \\ \textbf{ Elliptic Function Field}\\ \textbf{(Example IV.11)}}  &$\sqrt{q}+2\sqrt[4]{q}$& 
				$\makecell{\frac{\sqrt{q}^{d}+1}{\sqrt{q}+1}, d\ \text{odd}}$
				&\makecell{$\frac{\sqrt{q}+2\sqrt[4]{q}}{\sqrt{q}+2\sqrt[4]{q}+m}\frac{\sqrt{q}^{d}+1}{\sqrt{q}+1}
					-\frac{ma+1}{\sqrt{q}+2\sqrt[4]{q}+m}$,\\ $ a=\gamma\deg \mathcal{R}_{Q}$   }\\      \midrule
				\makecell{\textbf{Narrow Ray Class Field via }\\\textbf{ Hyperelliptic Function Field}\\\textbf{(Example IV.12)}} &\makecell{$\sqrt{q}+(t-1)\sqrt[4]{q}$,\\ $t\mid(\sqrt{q}-1)$}&\makecell{$\frac{\sqrt{q}^{d}+1}{\sqrt{q}+1}$, $ d\ \text{odd}$}
				& \makecell{$\frac{\sqrt{q}+(t-1)\sqrt[4]{q}}{\sqrt{q}+(t-1)\sqrt[4]{q}+m} \frac{\sqrt{q}^{d}+1}{\sqrt{q}+1}-\frac{ma+1}{\sqrt{q}+(t-1)\sqrt[4]{q}+m}$,\\ $ a=\gamma\deg \mathcal{R}_{Q}$  }\\   \midrule   
\makecell{\textbf{Narrow Ray Class Field via }\\\textbf{ Function Field $F/\mathbb{F}_{\sqrt{q}}$ with Genus $\mathfrak{g}_{F}\geq 1$}\\\textbf{(Theorem III.1)}} &\makecell{$\sqrt{q}+s_{\sqrt{q}}$,\\$ |s_{\sqrt{q}}|\leq\mathfrak{g}_{F}\cdot\lfloor2\sqrt[4]{q}\rfloor$}&\makecell{$\frac{\sqrt{q}^{d}+1}{\sqrt{q}+1}$, $ d\ \text{odd}$}
				& \makecell{$\frac{\sqrt{q}+ s_{\sqrt{q}} }{\sqrt{q}+s_{\sqrt{q}} +m} \frac{\sqrt{q}^{d}+1}{\sqrt{q}+1}-\frac{ma+1}{\sqrt{q}+s_{\sqrt{q}} +m}$,\\  $ a=\gamma\deg \mathcal{R}_{Q}$ }\\   \midrule
		\end{tabular}}
		\label{q}
	\end{table}

	\section{Multi-sequences with high nonlinear complexity  from Narrow Ray Class Fields  }
	\label{sec:3}
	In this section, we illustrate that Drinfeld modules of rank $1$ and narrow ray class fields provide useful tools to generate new multi-sequences with significantly high nonlinear complexity.
	
	\subsection{Construction of multi-sequences from Descent Narrow Ray Class Fields}

	Let $\mathbb{F}_{r}$ be a subfield of $\mathbb{F}_{q}, q=r^{2}$ and define two integral rings
	\begin{displaymath}
		\begin{split}
			A_{q}&=\{x\in F/\mathbb{F}_{q}: \mathrm{v}_{P}(x)\geq 0\ \text{for all}\ P\in\mathbb{P}_{F/\mathbb{F}_{q}}\ \text{with}\ P\neq\infty\}\\
			A_{r}&=\{x\in F/\mathbb{F}_{r}: \mathrm{v}_{P}(x)\geq 0\ \text{for all}\ P\in\mathbb{P}_{F/\mathbb{F}_{r}}\ \text{with}\ P\neq\infty\}.
		\end{split}
	\end{displaymath}
	Let $Q$ be a place of $F/\mathbb{F}_{r}$ of odd degree $d\geq 3$. Denote by the constant extension $L=\mathbb{F}_{q}\cdot F$.
    
	Then we have the following cyclic descent narrow ray class field.
	\begin{lemma}\label{lll} (Lemma 3.5, \cite{26})   (Cyclic descent narrow ray class field)
	Let $\mathbb{F}_{q}, q=r^{2}$ be an extension of $\mathbb{F}_{r}$. Fix a place $Q$ of $F/\mathbb{F}_{r}$ with odd degree $d\geq 3$. Then there exists a function field $H/\mathbb{F}_{q}$ such that:
    \begin{enumerate}
        \item $H/L$ is a cyclic Galois extension of degree
        \begin{displaymath}
            \mu_{H}:=\frac{r^{d}+1}{r+1};
        \end{displaymath}
        \item Each $\mathbb{F}_{r}-$rational place of $F$ completely splits in $H/L$.
        \item $Q$ is a unique finite place that can be ramified in $H/L$;
        \item We have an estimation of the genus
        \begin{displaymath}
            \mathfrak{g}_{H}\leq\mu_{H}(\mathfrak{g}_{F}-1)+\frac{d(\mu_{H}-1)}{2}+1.
        \end{displaymath}
    \end{enumerate}
	\end{lemma}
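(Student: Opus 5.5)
The plan is to construct $H$ inside the narrow ray class field $\mathcal{H}_{A_{q}}(\Lambda_{Q})$ of $L/\mathbb{F}_{q}$ and compute with Galois groups through the Artin isomorphism $\mathrm{Gal}(\mathcal{H}_{A_{q}}(\Lambda_{Q})/L)\simeq \mathrm{Cl}^{+}_{Q}(A_{q})$. Since $d$ is odd, $Q$ stays a prime of degree $d$ in $L=F\cdot\mathbb{F}_{q}$, now over $\mathbb{F}_{q}$. The exact sequence
\[
1\to (A_{q}/Q)^{*}/\mathbb{F}_{q}^{*}\to \mathrm{Cl}^{+}_{Q}(A_{q})/\mathbb{F}_{q}^{*}\to \mathrm{Cl}(A_{q})\to 1
\]
shows that the quotient by $\mathbb{F}_{q}^{*}$ kills the sign part, which is the ramification at $\infty$ described in Proposition \ref{p1}(3). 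So the fixed field of $\mathbb{F}_{q}^{*}$ is an extension of $L$ in which $\infty$ splits completely. Its Galois group is an extension of $\mathrm{Cl}(A_{q})$ by $\mathbb{F}_{q^{d}}^{*}/\mathbb{F}_{q}^{*}$, which has order $(q^{d}-1)/(q-1)$.

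Next, take the image $\mathrm{Cl}^{+}_{Q}(A_{r})\to \mathrm{Cl}^{+}_{Q}(A_{q})$ induced by the inclusion $A_{r}\subseteq A_{q}$, and let $\Gamma$ be the fixed field of $\mathbb{F}_{q}^{*}\cdot \mathrm{Cl}^{+}_{Q}(A_{r})$. I would check that the norm map $\mathrm{Cl}(A_{r})\to\mathrm{Cl}(A_{q})$ composed back covers the class-group part sufficiently. The cleanest route is to pass to the quotient by the full image of the class group (the Hilbert class field part), so that only the torus survives:
\[
\mathbb{F}_{q^{d}}^{*}/(\mathbb{F}_{q}^{*}\cdot\mathbb{F}_{r^{d}}^{*}).
\]
This group is cyclic. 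Using $\gcd(q-1,r^{d}-1)=r-1$ for $d$ odd, its order is
\[
\frac{(r^{2d}-1)(r-1)}{(r^{2}-1)(r^{d}-1)}=\frac{r^{d}+1}{r+1}=\mu_{H}.
\]
I then define $H$ as the fixed field, inside $\Gamma$'s tower, of the preimage of this torus quotient, meaning $H$ corresponds to the cyclic quotient of order $\mu_{H}$. This gives (1), with $H/L$ cyclic of degree $\mu_{H}$.

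For (2), let $P$ be an $\mathbb{F}_{r}$-rational place of $F$ with $P\neq Q$. For $P\neq\infty$, its Frobenius in $\mathrm{Cl}^{+}_{Q}(A_{q})$ is the class of the prime of $L$ above $P$. That prime is the extension of an $A_{r}$-ideal of degree one, so it is the class of $P^{2}$ as an $A_{r}$-ideal in the image of $\mathrm{Cl}^{+}_{Q}(A_{r})$. Because $P$ is rational it is inert of degree $2$ in $L/F$, so the Frobenius of $L$ is the square of that of $F$; in either case it lies in the image of $\mathrm{Cl}^{+}_{Q}(A_{r})$. Hence the Frobenius dies in $\mathrm{Gal}(H/L)$ and $P$ splits completely. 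For $\infty$, its decomposition group is contained in $\mathbb{F}_{q}^{*}$, which is also killed. The sign-normalization is needed to make this precise.

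For (3), unramifiedness outside $Q$ and $\infty$ follows from Proposition \ref{p1}(1), and $\infty$ is unramified in $H$ because its inertia group is $\mathbb{F}_{q}^{*}$. For (4), $Q$ is tamely ramified, since its inertia is a quotient of $\mathbb{F}_{q^{d}}^{*}$, which has order prime to $p$. It is then a direct count: the places above $Q$ have $e\le\mu_{H}$ and total degree at most $d\mu_{H}$, so $\deg\mathrm{Diff}(H/L)\le d(\mu_{H}-1)$. The Hurwitz genus formula with $\mathfrak{g}_{L}=\mathfrak{g}_{F}$ then gives the stated bound.

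The main obstacle is making the class-group part of the descent rigorous. One must show that the fixed field has degree exactly $\mu_{H}$ over $L$, not $\mu_{H}$ times a factor from $\mathrm{Cl}(A_{q})/\mathrm{Cl}(A_{r})$. I would first try to quotient additionally by the Hilbert-class-group image, or equivalently take $H$ as the subfield corresponding to the torus quotient as above. This is the choice made in Lemma 3.5 of \cite{26}, and I would follow it.
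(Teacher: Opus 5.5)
Your route is the same two-stage descent that the paper records in its remark after the lemma; the paper itself only cites Lemma 3.5 of Guruswami--Xing. The first stage is $\Gamma=\mathcal{H}_{A_q}(\Lambda_Q)^{\mathbb{F}_q^*\cdot\mathrm{Cl}^+_Q(A_r)}$, and the second is a cyclic piece of degree $\mu_H$. Your arguments for (2)--(4) are fine, and in fact they hold for every subfield of $\Gamma$.

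\textbf{The gap is in the second descent, i.e.\ in (1).} In $G=\mathrm{Gal}(\Gamma/L)$ the torus $T=\mathbb{F}_{q^d}^*/(\mathbb{F}_q^*\mathbb{F}_{r^d}^*)$ is a \emph{subgroup}, namely the kernel of $G\to\mathrm{Cl}(A_q)/\mathrm{Cl}(A_r)$. So $T=\mathrm{Gal}(\Gamma/\Gamma\cap\mathcal{H}_{A_q})$. The class-group part is a \emph{quotient}, and there is no ``image of the class group'' inside $G$ that you can divide out.
\begin{itemize}
\item Passing to the Hilbert-class-field part, or taking the fixed field of the preimage of $T$, gives $\Gamma\cap\mathcal{H}_{A_q}$. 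Its degree over $L$ is $\mathrm{h}(F/\mathbb{F}_q)/\mathrm{h}(F/\mathbb{F}_r)$, not $\mu_H$.
\item A subfield of $\Gamma$ with $\mathrm{Gal}(H/L)\cong T$ exists only if $1\to T\to G\to\mathrm{Cl}(A_q)/\mathrm{Cl}(A_r)\to 1$ splits.
\item Nothing forces that splitting when $\gcd(\mu_H,\mathrm{h}(F/\mathbb{F}_q)/\mathrm{h}(F/\mathbb{F}_r))>1$. Abstractly, $\mathbb{Z}/9\mathbb{Z}$ has no complement to its subgroup of order $3$.
\end{itemize}
So your ``torus quotient'' is not defined in general, and this is exactly the obstacle you flagged but did not resolve.

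\textbf{The missing idea is a fact about finite abelian groups.} Since $G$ contains the cyclic group $T$ of order $\mu_H$, $\mu_H$ divides the exponent of $G$. Projecting onto the largest invariant factor of $G$ and reducing mod $\mu_H$ then gives a surjection $G\to\mathbb{Z}/\mu_H\mathbb{Z}$. Let $K$ be its kernel and set $H=\Gamma^K$. Then $H/L$ is cyclic of degree $\mu_H$, and (2)--(4) follow from $H\subseteq\Gamma$ as you argue. For such an $H$ the place $Q$ need not be totally ramified. This matches the paper allowing $\Delta_Q=\mu_Hd/e_Q$.

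\textbf{Two smaller repairs.}
\begin{itemize}
\item To know that $T$ really has order $\mu_H$ inside $G$, you need $\mathbb{F}_{q^d}^*\cap(\mathbb{F}_q^*\cdot\mathrm{Cl}^+_Q(A_r))=\mathbb{F}_q^*\mathbb{F}_{r^d}^*$ inside $\mathrm{Cl}^+_Q(A_q)$. This rests on injectivity of the conorm $\mathrm{Cl}(A_r)\to\mathrm{Cl}(A_q)$ for a constant field extension, which is also the fact behind the paper's count $|\mathrm{Cl}^+_Q(A_r)\cap\mathbb{F}_q^*|=r-1$. It does not rest on any covering property of a norm map; the cokernel of that map is precisely the obstruction above.
\item The Artin symbol of the place of $L$ above a rational $P$ is the class of the extended ideal $PA_q$, i.e.\ the image of $[P]\in\mathrm{Cl}^+_Q(A_r)$. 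It is not the class of $P^2$, which is its norm.
\end{itemize}
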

	
	\begin{remark}
    \begin{enumerate}
    \item  The cyclic descent narrow ray class field is actually the second descent. The first descent function field can be constructed as follows. Consider extension $\mathcal{H}_{A_{q}}(\Lambda_{Q})/\mathcal{H}_{A_{q}}$, then we have $\mathrm{Gal}(\mathcal{H}_{A_{q}}(\Lambda_{Q})/F/\mathbb{F}_{q})\simeq\mathrm{Cl}^{+}_{Q}(A_{q})$. Note that the extension is Galois and there exists a unique subfield $\Gamma$ fixed by $\mathbb{F}^{*}_{q}\cdot\mathrm{Cl}^{+}_{Q}(A_{r})$, \textit{that is}, $\Gamma=\{x\in \mathcal{H}_{A_{q}}(\Lambda_{Q}): \sigma(x)=x, \ \text{for}\ \sigma\ \in\mathbb{F}^{*}_{q}\cdot\mathrm{Cl}^{+}_{Q}(A_{r})\}$. By Galois theory, there is an isomorphism $\mathrm{Gal}(\mathcal{H}_{A_{q}}(\Lambda_{Q})/\Gamma)\simeq \mathbb{F}_{q}^{*}\cdot\mathrm{Cl}^{+}_{Q}(A_{r}).$
	The extension degree is given by
	\begin{displaymath}
		[\Gamma:L]=\frac{|\mathrm{Cl}^{+}_{Q}(A_{q})|}{|\mathbb{F}^{*}_{q}\cdot\mathrm{Cl}^{+}_{Q}(A_{r})|}=\frac{\mathrm{h}(F/\mathbb{F}_{q})(r^{d}+1)}{\mathrm{h}(F/\mathbb{F}_{r})(r+1)},
		\label{e1}
	\end{displaymath}
	where the last equality follows from 
	\begin{displaymath}
		\begin{split}
			& [\mathcal{H}_{A_{q}}(\Lambda_{\phi(Q)}):L]=|\mathrm{Cl}^{+}_{Q}(A_{q})|=\mathrm{h}(F/\mathbb{F}_{q})(r^{2d}-1),\\
			& |\mathrm{Cl}^{+}_{Q}(A_{r})\cdot\mathbb{F}_{q}^{*}|=|\mathrm{Cl}^{+}_{Q}(A_{r})|\cdot|\mathbb{F}_{q}^{*}|/|\mathrm{Cl}^{+}_{Q}\mathrm(A_{r})\cap\mathbb{F}_{q}^{*}|=\mathrm{h}(F/\mathbb{F}_{r})(r^{d}-1)(r+1).\\
		\end{split}
	\end{displaymath}
    For details, the readers may refer to \cite{49,26}.
\item  The extension degree of the first descent function field $\Gamma$ is given by
        \begin{displaymath}
            [\Gamma:L]=\frac{\mathrm{h}(F/\mathbb{F}_{q})(r^{d}+1)}{\mathrm{h}(F/\mathbb{F}_{r})(r+1)}
        \end{displaymath}
but $\Gamma/L$ is not necessarily cyclic. Hence, one cannot choose a cyclic automorphism of order $[\Gamma:L]$. The class number factor must not be included in the period by Galois theory.
\item In particular, if we take $F/\mathbb{F}_{r}$ as a rational function field, then the class numbers of $F/\mathbb{F}_{r}$ and $F/\mathbb{F}_{q}$ can be given by $\mathrm{h}(F/\mathbb{F}_{r})=\mathrm{h}(F/\mathbb{F}_{q})=1 $ and we have $$[\Gamma:L]=[H:L]=\frac{r^{d}+1}{r+1}, $$ which means that the first and second descents are identical. The genus of $H$ achieves equality, \textit{i.e.}, $$\mathfrak{g}_{H}=-\mu_{H}+\frac{d(\mu_{H}-1)}{2}+1=\frac{(d-1)(\mu_{H}-1)}{2}=\frac{(d-1)}{2}\left(\frac{r^{d}+1}{r+1}-1\right).$$ Another proof via the algebraic structure of cyclotomic function fields can be checked at Proposition IV.2 in \cite{100}.
\end{enumerate}
	\end{remark}

Let $\sigma$ be a generator of $\mathrm{G}=\mathrm{Gal}(H/L)$. Let $\mathcal{R}_{Q}=\sum_{R\mid Q}R$ be the reduced divisor supported on the place of $H$ lying above $Q$. It should be noted that $\mathcal{R}_{Q}$ is $\sigma-$invariant. Denote by $e_{Q}$ the common ramification index above $Q$, then 
    \begin{displaymath}
        \Delta_{Q}:=\deg\mathcal{R}_{Q}=\frac{\mu_{H}d}{e_{Q}}.
    \end{displaymath}
	In particular, we have $d\leq \Delta_{Q}\leq \mu_{H}d$, and $\Delta_{Q}=d$ in the totally ramified case.

	Let $\mathfrak{p}_{0},\mathfrak{p}_{1},\cdots,\mathfrak{p}_{r+s_{r}}$ be $r+s_{r}+1$ rational places in $F/\mathbb{F}_{r}$ and it is also the rational places in $L$. By Lemma \ref{lll}, each $\mathfrak{p}_{t}$ is completely splitting in $H$. Without loss of generality, choose any rational place $P_{t}$ lying over $\mathfrak{p}_{t}$ and denote 
	\begin{displaymath}
		\Theta_{t}=\{P_{t},\sigma(P_{t}),\dots,\sigma^{ \mu_{H} -1}(P_{t})\}.
	\end{displaymath}
    by the orbit of each $P_{t}$. In the following, $\Theta_{0}$ shall be used to generate a evaluation function while $\Theta_{t}:1\leq t\leq r+s_{r}$ will be used as evaluation places. 

 For our construction, we choose 
\begin{displaymath}
    A=\gamma\mathcal{R}_{Q}, a:=\deg A=\gamma\Delta_{Q},
\end{displaymath}
with $\gamma\geq 1$.  Note that $A$ is $\sigma-$invariant, whose support is disjoint from all rational orbits.

We impose the condition
\begin{displaymath}
    2\mathfrak{g}_{H}-1\leq a\leq (r+s_{r}-1)\mu_{H}-1.
\end{displaymath}

	The evaluation function is taken from the Riemann-Roch space $\mathcal{L}(A+P_{0})\setminus\mathcal{L}(A)$. The existence of $z$ is guaranteed by the Riemann-Roch theorem. Because $\deg A\geq 2\mathfrak{g}_{H}-1$ and $P_{0}\notin\mathrm{Supp}(A)$,
	\begin{displaymath}
   \ell(A+P_{0})-\ell(A)=1.
	\end{displaymath}
    Choose $0\neq z\in \mathcal{L}(A+P_{0})\setminus\mathcal{L}(A)$. Then $z$ has a simple pole at $P_0$ and all its other poles are contained in $\mathrm{Supp}(A)$.
	
	 Then we can construct a multi-sequence of dimension $r+s_{r}$ and period $\mu_{H}$:
	\begin{displaymath}
		\mathcal{S}=\{\mathbf{s}^{(t)}: 1\leq t\leq r+s_{r}\}
		\label{22}
	\end{displaymath}
	with each component given by $\mathbf{s}^{(t)}:=\{z(\sigma^{j}(P_{t})\}_{j=0}^{\infty}$ and $z\in\mathcal{L}(A+P_{0})\setminus\mathcal{L}(A)$.

	\subsection{High Linear Complexity}
	\begin{theorem}\label{t1}  Let $\mathbb{F}_{r}$ be a finite field with $r$ elements, and let $\mathbb{F}_{q}, q=r^{2}$ be an extension of $\mathbb{F}_{r}$.    Consider a function field $F/\mathbb{F}_{r}$ with genus $\mathfrak{g}_{F}$ and $1+r+s_{r}, |s_{r}|\leq \mathfrak{g}_{F}\cdot\lfloor 2\sqrt{r}\rfloor$ rational places. Fix a place $Q\in\mathbb{P}_{F}$ of odd degree $d\geq 3$ and $ 2\mathfrak{g}_{H}-1  \leq a\leq (r+s_{r}-1) \mu_{H} -1$.  Then we can construct a multi-sequence $$\mathcal{S}=\{\mathbf{s}^{(t)}:1\leq t\leq r+s_{r}\}$$ with dimension $r+s_{r}$, period $\mu_{H} $, and linear complexity 
		\begin{displaymath}
			\mathrm{L}(\mathcal{S})= \mu_{H}.
		\end{displaymath}
	\end{theorem}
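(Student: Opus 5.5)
The plan is the standard argument from the Hermitian and cyclotomic constructions (\cite{2,3}): suppose a short linear recurrence holds, turn it into a function in a Riemann--Roch space with too many zeros, and conclude it vanishes identically. The upper bound $\mathrm{L}(\mathcal{S})\leq\mu_{H}$ is immediate, since every component sequence $\mathbf{s}^{(t)}=\{z(\sigma^{j}(P_{t}))\}_{j\geq 0}$ is periodic with period dividing $\mu_{H}=\mathrm{ord}(\sigma)$. So the work is the lower bound. Suppose $\ell<\mu_{H}$ and $\lambda_{0},\dots,\lambda_{\ell-1}\in\mathbb{F}_{q}$ satisfy $s^{(t)}_{j+\ell}=\sum_{i=0}^{\ell-1}\lambda_{i}s^{(t)}_{j+i}$ for all $t$ and $j$. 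By Lemma \ref{l1}(3), $z(\sigma^{j+i}(P_{t}))=(\sigma^{-i}(z))(\sigma^{j}(P_{t}))$, provided $\sigma^{j+i}(P_t)$ is not a pole of $z$. This holds because the poles of $z$ lie in $\{P_{0}\}\cup\mathrm{Supp}(A)$, and both $\Theta_{0}$ and $\mathrm{Supp}(A)$ are disjoint from $\Theta_{t}$ for $t\geq1$. Hence the function
\[
w:=\sigma^{-\ell}(z)-\sum_{i=0}^{\ell-1}\lambda_{i}\sigma^{-i}(z)
\]
vanishes at every place of $\Theta_{1}\cup\cdots\cup\Theta_{r+s_{r}}$, that is, at $(r+s_{r})\mu_{H}$ distinct rational places.

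Next I bound the poles of $w$. Since $A$ is $\sigma$-invariant, Lemma \ref{l1}(2) gives $\sigma^{-i}(z)\in\mathcal{L}(A+\sigma^{-i}(P_{0}))$. Therefore
\[
w\in\mathcal{L}\Big(A+\sum_{i=0}^{\ell}\sigma^{-i}(P_{0})\Big),
\]
a divisor of degree $a+\ell+1\leq a+\mu_{H}$. Using $a\leq(r+s_{r}-1)\mu_{H}-1$, this degree is at most $(r+s_{r})\mu_{H}-1$, which is strictly less than the number of zeros found above. A nonzero function cannot have more zeros than the degree of its pole divisor, so $w=0$.

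Finally I derive a contradiction from $w=0$. Because $\ell<\mu_{H}$, the places $\sigma^{-\ell}(P_{0}),\sigma^{-\ell+1}(P_0),\dots,P_{0}$ are pairwise distinct, as $P_{0}$ has trivial stabilizer (it splits completely, by Lemma \ref{lll}). The summand $\sigma^{-\ell}(z)$ has a simple pole at $\sigma^{-\ell}(P_{0})$: here $\mathrm{v}_{\sigma^{-\ell}(P_{0})}(\sigma^{-\ell}(z))=\mathrm{v}_{P_{0}}(z)=-1$ by Lemma \ref{l1}. Each other summand $\sigma^{-i}(z)$ with $i<\ell$ is regular there, because its poles lie in $\{\sigma^{-i}(P_{0})\}\cup\mathrm{Supp}(A)$ and $\sigma^{-\ell}(P_0)\notin\mathrm{Supp}(A)$. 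So $\mathrm{v}_{\sigma^{-\ell}(P_{0})}(w)=-1$, which contradicts $w=0$. Hence no recurrence of length $\ell<\mu_{H}$ exists, and $\mathrm{L}(\mathcal{S})=\mu_{H}$.

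The step I expect to need the most care is the pole and support bookkeeping. It relies on three facts:
\begin{itemize}
\item[(i)] the $\sigma$-invariance of $A=\gamma\mathcal{R}_{Q}$;
\item[(ii)] the disjointness of $\mathrm{Supp}(A)$ from all the rational orbits $\Theta_{t}$, which holds because $\mathfrak{p}_{t}\neq Q$ ($Q$ has degree $d\geq3$);
\item[(iii)] the distinctness of the orbits $\Theta_{t}$ for different $t$, which holds because they lie over distinct places $\mathfrak{p}_{t}$ of $L$.
\end{itemize}
A further point is that the recurrence must hold for all $j$ across a full period, so that every point of each orbit is counted as a zero of $w$. This is automatic here because the sequences are infinite and purely periodic.
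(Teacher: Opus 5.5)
Your proposal is correct and takes essentially the same route as the paper. The paper also forms $w_{\ell}=\sum_{i=0}^{\ell}\lambda_{i}\sigma^{-i}(z)$, places it in $\mathcal{L}\bigl(A+\sum_{i=0}^{\ell}\sigma^{-i}(P_{0})-\sum_{t}\sum_{j}\sigma^{j}(P_{t})\bigr)$, gets nonvanishing from the simple pole at $\sigma^{-\ell}(P_{0})$, and derives the impossible inequality $a+\ell+1\geq (r+s_{r})\mu_{H}$. The differences are only cosmetic: you deduce $w=0$ first and then contradict it, and you spell out details the paper leaves implicit (the trivial stabilizer of $P_{0}$, the disjointness of $\mathrm{Supp}(A)$ from the orbits, and the case $\ell=0$).
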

	\begin{proof}
		Suppose that the linear complexity of $\mathcal{S}$ is strictly smaller than $\mu_{H}$. Let $\ell:=\ell(\mathcal{S})$. By the definition of linear complexity, there exist $\lambda_{0},\lambda_{1},\cdots,\lambda_{\ell}\in\mathbb{F}_{q}$ with $\lambda_{\ell}\neq 0$ such that
		\begin{displaymath}
	\sum^{\ell}_{i=0}\lambda_{i}s^{(u)}_{i+j}=\sum^{\ell}_{i=0}\lambda_{i}z(\sigma^{i+j}(P_{u}))=0
		\end{displaymath}
		for all $j\geq 0$ and any $1\leq u\leq r+s_{r}$.
		
		Now consider the function $w_{\ell}=\sum^{\ell}_{i=0}\lambda_{i}\sigma^{-i}(z)$.   We claim that the function $w_{\ell}$ is nonzero due to the fact that $\sigma^{-\ell}(z)$ has pole $\sigma^{-\ell}(P_{0})$ and $\lambda_{\ell}\neq 0$, while $\sigma^{-\ell}(P_{0})$ is not a pole for any $\sigma^{-i}(z), 0\leq i\leq \ell-1$.
        
         By Lemma \ref{l1}, we have $w_{\ell}\in \mathcal{L}(A+\sum^{\ell}_{i=0}\sigma^{-i}(P_{0}))$ from  $ (\sigma^{-i}(z))_{\infty}\leq A+ \sigma^{-i}(P_{0})$. It is also checked that  $$\sum^{\ell}_{j=0}\lambda_{i}\sigma^{-i}(z)\in\mathcal{L}\left(A+\sum^{\ell}_{i=0}\sigma^{-i}(P_{0})-\sum^{r+s_{r}}_{i=1}\sum^{ \mu_{H} -1}_{j=0}\sigma^{j}(P_{i})\right)$$ and we have $$\deg\left(A+\sum^{\ell}_{i=0}\sigma^{-i}(P_{0})-\sum^{r+s_{r}}_{i=1}\sum^{\mu_{H}-1}_{j=0}\sigma^{j}(P_{i})\right)\geq 0,$$ which is equivalent to $a+\ell+1\geq(r+s_{r}) \mu_{H}  $. However, it is impossible for $2\mathfrak{g}_{H}-1\leq a\leq (r+s_{r}-1) \mu_{H} -1$ and $1\leq \ell<  \mu_{H}$. Then we have the desired result.
	\end{proof}

    \begin{remark}
        If the base field is a rational function field, then we have cyclotomic function field; see Theorem 6 in \cite{3}. The genus of the function field $H$ is given by $\mathfrak{g}_{H}=\frac{\sqrt{q}(\sqrt{q}-1)}{2}$ and $\sqrt[2]{q^{3}}+1$ rational places if we choose $\deg P=3$. As demonstrated in \cite{17}, $H$ is isomorphic to the Hermitian function field.
    \end{remark}
	
	\begin{corollary}
		\begin{enumerate}
			\item When $F/\mathbb{F}_{\sqrt{q}}$ is taken as a rational function field and $\gamma=\frac{\sqrt{q}^{d}-\sqrt{q}}{\sqrt{q}+1}$, one has a multi-sequence $\mathcal{S}=\{\mathbf{s}_{t}: 1\leq t\leq \sqrt{q}\}$ with $  \mathbf{s}_{t}=\{z(\sigma^{i}(P_{t}))\}_{i=0}^{\frac{\sqrt{q}^{d}+1}{\sqrt{q}+1}-1}$, $1\leq t\leq \sqrt{q}$; see \cite{3}. Then the linear complexity of $\mathcal{S}$ is given by $\mathrm{L}(\mathcal{S})=  \frac{\sqrt{q}^{d}+1}{\sqrt{q}+1}  $.
			\item Under the conditions above, we have known that the fixed field $H$ is a Hermitian function field if $d=3$; see \cite{3}. The multi-sequence is modified to $\mathcal{S}=\{\mathbf{s}_{t}: 1\leq t\leq \sqrt{q}\}$ with $  \mathcal{S}_{t}=\{z(\sigma^{i}(P_{t}))\}_{i=0}^{q-\sqrt{q}}, 1\leq t\leq \sqrt{q}$. The linear complexity of $\mathcal{S}$ is given by $\mathrm{L}(\mathcal{S})=  q-\sqrt{q}+1  $.
		\end{enumerate}
		
	\end{corollary}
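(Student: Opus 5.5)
The corollary is a specialization of Theorem \ref{t1}, so the plan is to check that in each case the data satisfy the hypotheses of that theorem, and then read off the period. For part 1) take $F/\mathbb{F}_{r}$, $r=\sqrt{q}$, to be the rational function field. Then $\mathfrak{g}_{F}=0$ and $F$ has exactly $r+1$ rational places, so $s_{r}=0$ and the dimension is $r+s_{r}=\sqrt{q}$. Since $\mathrm{h}(F/\mathbb{F}_{r})=\mathrm{h}(F/\mathbb{F}_{q})=1$, item 3 of the preceding remark shows that the first and second descents coincide. Hence $H=\Gamma$ is the cyclotomic (descent) function field used in \cite{3}, and
\[
\mathfrak{g}_{H}=\frac{(d-1)}{2}\left(\frac{r^{d}+1}{r+1}-1\right).
\]
One caveat: Theorem \ref{t1} is stated for $\mathfrak{g}_{F}\geq1$ in the Drinfeld module setup. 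The proof of Theorem \ref{t1}, however, uses only Lemma \ref{lll}, Lemma \ref{l1} and Riemann--Roch on $H$, and none of these needs $\mathfrak{g}_{F}\ge 1$. I will note this explicitly so the specialization is legitimate.

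Next I verify the range condition on $a$. In the cyclotomic case $Q$ is totally ramified (Proposition \ref{p1}(2) descends to $H/L$), so $\Delta_{Q}=d$. With $\gamma=\frac{r^{d}-r}{r+1}$ this gives
\[
a=\gamma d=\frac{(r^{d}-r)d}{r+1}=(\mu_{H}-1)d .
\]
This is at least $(d-1)(\mu_{H}-1)=2\mathfrak{g}_{H}$, so $a\geq 2\mathfrak{g}_{H}-1$. For the upper bound I need $(\mu_{H}-1)d\leq (r-1)\mu_{H}-1$. This holds whenever $d\leq r-1$, and it matches the choice made in \cite{3}. Checking this inequality is the main obstacle: for large $d$ relative to $r$ it can fail. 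I would therefore either restrict to the regime in which \cite{3} works or observe that the proof of Theorem \ref{t1} only needs $a+\ell+1<(r+s_{r})\mu_{H}$ for $\ell<\mu_{H}$, which is exactly $a\le (r+s_r-1)\mu_H-1$. Once the condition holds, Theorem \ref{t1} yields $\mathrm{L}(\mathcal{S})=\mu_{H}=\frac{\sqrt{q}^{d}+1}{\sqrt{q}+1}$.

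For part 2), set $d=3$. Then $\mu_{H}=\frac{r^{3}+1}{r+1}=r^{2}-r+1=q-\sqrt{q}+1$, so the component sequences have indices $0,\dots,q-\sqrt{q}$. By the remark following Theorem \ref{t1} (and \cite{17}), $H$ is the Hermitian function field with $\mathfrak{g}_{H}=\frac{\sqrt{q}(\sqrt{q}-1)}{2}$. I would recheck the range condition on $a$ with these values and then apply part 1) to conclude $\mathrm{L}(\mathcal{S})=q-\sqrt{q}+1$.
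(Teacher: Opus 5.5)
Your route is the paper's own: the paper gives no separate proof of the corollary, which is meant as Theorem \ref{t1} specialized to $\mathfrak{g}_F=0$, $s_r=0$, $\Delta_Q=d$. The one hypothesis that actually needs checking is left open in your proposal, and your second option does not help. Asking that $a+\ell+1<(r+s_r)\mu_H$ for all $\ell<\mu_H$ is exactly the condition $a\le(r+s_r-1)\mu_H-1$ again. With $a=d(\mu_H-1)$ and $r=\sqrt q$ this becomes $(d-\sqrt q+1)\mu_H\le d-1$. That holds if and only if $d\le\sqrt q-1$, since for $d\ge\sqrt q$ the left side is at least $\mu_H>d-1$. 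In part 2 the check you postponed therefore amounts to $\sqrt q\ge 4$, so for $q=4,9$ Theorem \ref{t1} does not apply.

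No other argument can rescue the unrestricted statement, because the conclusion can fail for an admissible $z$. Take $q=4$, $d=3$; then $\mu_H=3$, $\mathfrak{g}_H=1$, $A=2R$ with $\deg R=3$, and $a=6$.
\begin{itemize}
\item The $\sigma$-invariant space $V=\mathcal{L}(A+\Theta_0-\Theta_1-\Theta_2)$ has degree $3$, hence dimension $3$. By contrast, $\mathcal{L}(A-\Theta_1-\Theta_2)$ has dimension at most $1$.
\item Since $\sigma^3=1$ acts semisimply with eigenvalues in $\mathbb{F}_4$, some eigenvector $w\in V$ with $\sigma(w)=\zeta w$ has a pole, necessarily simple, at every point of $\Theta_0$.
\item Let $\pi_\zeta=\frac13\sum_{i=0}^{2}\zeta^{-i}\sigma^{i}$ and fix any $z_0\in\mathcal{L}(A+P_0)\setminus\mathcal{L}(A)$. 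Choose $c\neq 0$ so that $y:=w-c\,\pi_\zeta(z_0)$ is regular at $P_0$. Being a $\zeta$-eigenvector, $y$ is then regular along all of $\Theta_0$, so $y\in\mathcal{L}(A)$.
\item Hence $z:=cz_0+y$ lies in $\mathcal{L}(A+P_0)\setminus\mathcal{L}(A)$ and satisfies $\pi_\zeta(z)=w$.
\item By Lemma \ref{l1}, $s^{(t)}_j=(\sigma^{-j}z)(P_t)=\sum_{\xi^3=1}\xi^{-j}\,\pi_\xi(z)(P_t)$. Because $w$ vanishes on $\Theta_1\cup\Theta_2$, both component sequences are annihilated by the quadratic $\prod_{\xi\neq\zeta}(x-\xi^{-1})$. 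So $\mathrm{L}(\mathcal{S})\le 2<\mu_H$.
\end{itemize}

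What your argument proves is the corollary under the extra hypothesis $d\le\sqrt q-1$, and $\sqrt q\ge 4$ in part 2. The paper's one-line specialization silently omits this condition too. You should state it as a hypothesis rather than as a ``regime'' to be chosen later.

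A smaller point concerns the genus you copied from the remark, which is miscomputed there. Lemma \ref{lll} with $\mathfrak{g}_F=0$ (with equality, since the ramification at $Q$ is tame) gives $\mathfrak{g}_H=\frac{(d-2)(\mu_H-1)}{2}$. This agrees with the Hermitian genus $\frac{\sqrt q(\sqrt q-1)}{2}$ that you use in part 2, whereas $\frac{(d-1)(\mu_H-1)}{2}$ does not. Your check $a\ge 2\mathfrak{g}_H-1$ still goes through, because the true genus is smaller.
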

	
	\begin{corollary}(Subfamilies)
		Let $\alpha\in\mathbb{N}$ satisfy
		\begin{displaymath}
			\alpha\geq 1+(a+1)/\mu_{H}.
		\end{displaymath}
		Then any $\alpha$ periodic sequences taken from the multi-sequence $\mathcal{S}$ form a new multi-sequence of linear complexity $\mu_{H}$. In particular, we denote by $$\alpha_{0}=1+\lceil (a+1)/\mu_{H}\rceil,$$
         where $\lceil x\rceil$ denotes the ceiling of $x\in\mathbb{R}$
	\end{corollary}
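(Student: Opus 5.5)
The plan is to rerun the degree argument from the proof of Theorem \ref{t1}, but using only the $\alpha$ chosen component sequences. Fix any subset $T\subseteq\{1,\dots,r+s_{r}\}$ with $|T|=\alpha$ and let $\mathcal{S}_T=\{\mathbf{s}^{(t)}:t\in T\}$. Since $\mathcal{S}_T$ has period $\mu_{H}$, its linear complexity is at most $\mu_{H}$. So it suffices to rule out $\ell:=\mathrm{L}(\mathcal{S}_T)<\mu_{H}$, and I will argue by contradiction.

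Suppose $\ell<\mu_{H}$. Then there are $\lambda_0,\dots,\lambda_\ell\in\mathbb{F}_q$ with $\lambda_\ell\neq 0$ and
\begin{displaymath}
\sum_{i=0}^{\ell}\lambda_i z(\sigma^{i+j}(P_u))=0
\end{displaymath}
for all $j\geq 0$ and all $u\in T$. As in Theorem \ref{t1}, set $w_\ell=\sum_{i=0}^{\ell}\lambda_i\sigma^{-i}(z)$.

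\begin{enumerate}
\item \emph{$w_\ell$ is nonzero.} The places $\sigma^{-i}(P_0)$, $0\leq i\leq\ell$, are distinct because $\ell<\mu_{H}$ and $P_0$ splits completely. The place $\sigma^{-\ell}(P_0)$ is a pole of $\sigma^{-\ell}(z)$ only, so $\lambda_\ell\neq0$ forces $w_\ell\neq0$.
\item \emph{Pole bound.} By Lemma \ref{l1} and the $\sigma$-invariance of $A$, we get $w_\ell\in\mathcal{L}\bigl(A+\sum_{i=0}^{\ell}\sigma^{-i}(P_0)\bigr)$.
\item \emph{Zero bound.} Lemma \ref{l1} gives $\sigma^{-i}(z)(\sigma^{j}(P_u))=z(\sigma^{i+j}(P_u))$. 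Since the recurrence holds for all $j$ and the sequence is periodic, $w_\ell$ vanishes at every place of the orbit $\Theta_u$ for each $u\in T$. These places lie outside $\mathrm{Supp}(A)$ and outside $\Theta_0$. Hence
\begin{displaymath}
w_\ell\in\mathcal{L}\Bigl(A+\sum_{i=0}^{\ell}\sigma^{-i}(P_0)-\sum_{u\in T}\sum_{j=0}^{\mu_{H}-1}\sigma^{j}(P_u)\Bigr).
\end{displaymath}
\item \emph{Degree contradiction.} A nonzero element of a Riemann--Roch space forces the divisor to have nonnegative degree, so $a+\ell+1\geq\alpha\mu_{H}$. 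With $\ell\leq\mu_{H}-1$ this gives $a+\mu_{H}\geq\alpha\mu_{H}$, i.e.\ $\alpha\leq 1+a/\mu_{H}$. This contradicts the hypothesis $\alpha\geq 1+(a+1)/\mu_{H}>1+a/\mu_{H}$.
\end{enumerate}

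It follows that $\mathrm{L}(\mathcal{S}_T)=\mu_{H}$. The smallest integer $\alpha$ satisfying the hypothesis is $1+\lceil(a+1)/\mu_{H}\rceil=\alpha_0$, which justifies the notation. For the corollary to be non-vacuous we also need $\alpha_0\leq r+s_{r}$; this follows from the standing assumption $a\leq(r+s_{r}-1)\mu_{H}-1$.

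The only delicate point is the zero bound. The recurrence is stated for shifts $j\geq0$, but we need vanishing at all $\mu_{H}$ points of each orbit. This holds because $\sigma$ has order exactly $\mu_{H}$ and the $\sigma^{j}(P_u)$ are pairwise distinct: $\mathfrak{p}_u$ splits completely in $H/L$, so the orbit has size $\mu_{H}$. Everything else is a verbatim specialization of the proof of Theorem \ref{t1}, with $r+s_{r}$ replaced by $\alpha$.
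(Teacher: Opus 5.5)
Your proof is correct and matches the paper's intended argument: the paper gives no separate proof, and the corollary is the degree count from the proof of Theorem \ref{t1} with $r+s_{r}$ replaced by $\alpha$, where the hypothesis $\alpha\mu_{H}\geq a+\mu_{H}+1$ is exactly what rules out $a+\ell+1\geq\alpha\mu_{H}$ for $\ell\leq\mu_{H}-1$. Your extra checks are correct details the paper leaves unstated: the orbit points are distinct by complete splitting, and $\alpha_{0}\leq r+s_{r}$ follows from $a\leq(r+s_{r}-1)\mu_{H}-1$.
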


	\subsection{High Nonlinear Complexity }
	
	\begin{theorem}  \label{1234}
		Let $\mathbb{F}_{r}$ be a finite field with $r$ elements, and let $\mathbb{F}_{q}, q=r^{2}$ be an extension of $\mathbb{F}_{r}$.    Consider a function field $F/\mathbb{F}_{r}$ with genus $\mathfrak{g}_{F}$ and $1+r+s_{r}, |s_{r}|\leq \mathfrak{g}_{F}\cdot\lfloor 2\sqrt{r}\rfloor$ rational places. Fix a place $Q\in\mathbb{P}_{F}$ of odd degree $d\geq 3$ and $ 2\mathfrak{g}_{H}-1  \leq a\leq (r+s_{r}-1) \mu_{H} -1$.  Then we can construct a multi-sequence $$\mathcal{S}=\{\mathbf{s}^{(t)}:1\leq t\leq r+s_{r}\}$$ with dimension $r+s_{r}$, period $ \mu_{H} $, and nonlinear complexity lower bounded by
		\begin{displaymath}
			\mathrm{NL}_{m}(\mathcal{S})\geq\frac{(r+s_{r})\mu_{H}-ma-1}{m+r+s_{r}}.
		\end{displaymath}
	\end{theorem}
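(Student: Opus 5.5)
We argue by contradiction, following the Niederreiter–Xing strategy for the Hermitian construction. Suppose $\mathfrak{N}:=\mathrm{NL}_m(\mathcal{S})$ satisfies $\mathfrak{N}<\frac{(r+s_r)\mu_H-ma-1}{m+r+s_r}$. By Definition \ref{d1} there is a polynomial $\Phi\in\mathbb{F}_q[x_0,\dots,x_{\mathfrak{N}-1}]$ of total degree at most $m$ with
$$z(\sigma^{j+\mathfrak{N}}(P_t))=\Phi\bigl(z(\sigma^{j}(P_t)),\dots,z(\sigma^{j+\mathfrak{N}-1}(P_t))\bigr)$$
for every $1\le t\le r+s_r$ and every admissible $j$. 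Using Lemma \ref{l1}(3), $z(\sigma^{i}(P))=\sigma^{-i}(z)(P)$, so every relation is the vanishing at $\sigma^{j}(P_t)$ of the single function
$$w:=\sigma^{-\mathfrak{N}}(z)-\Phi\bigl(z,\sigma^{-1}(z),\dots,\sigma^{-(\mathfrak{N}-1)}(z)\bigr)\in H .$$

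First we show $w\neq0$. The function $\sigma^{-\mathfrak{N}}(z)$ has a simple pole at $\sigma^{-\mathfrak{N}}(P_0)$. Each $\sigma^{-i}(z)$ with $0\le i\le \mathfrak{N}-1$ has its poles only in $\mathrm{Supp}(A)\cup\{\sigma^{-i}(P_0)\}$, because $A$ is $\sigma$-invariant. Since $\mathfrak{N}<\mu_H$ (the bound is below $\mu_H$), the places $\sigma^{-i}(P_0)$ for $0\le i\le \mathfrak{N}$ are pairwise distinct. Hence $\Phi(\cdots)$ is regular at $\sigma^{-\mathfrak{N}}(P_0)$, and $w$ keeps the simple pole there.

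Next comes the pole bound. Each monomial of $\Phi$ has degree at most $m$ in the $\sigma^{-i}(z)$, and each $\sigma^{-i}(z)$ lies in $\mathcal{L}(A+\sigma^{-i}(P_0))$. A product of at most $m$ such factors therefore lies in $\mathcal{L}\bigl(mA+m\sum_{i=0}^{\mathfrak{N}-1}\sigma^{-i}(P_0)\bigr)$. It follows that
$$w\in\mathcal{L}\Bigl(mA+m\sum_{i=0}^{\mathfrak{N}-1}\sigma^{-i}(P_0)+\sigma^{-\mathfrak{N}}(P_0)\Bigr),$$
whose degree is $ma+m\mathfrak{N}+1$.

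Then we count zeros. For each $t$ the recurrence holds for $j=0,\dots,\mu_H-\mathfrak{N}-1$; by periodicity it can be wrapped around the orbit to all $j$ modulo $\mu_H$. So $w$ vanishes on $\sigma^{\mathfrak{N}+j}(P_t)$, which gives at least $\mu_H-\mathfrak{N}$ distinct rational places in each orbit $\Theta_t$, and all $\mu_H$ of them if the wraparound is used. These places are outside $\mathrm{Supp}(A)$ and outside the $P_0$-orbit, since the $\mathfrak{p}_t$ are distinct. Because $w\neq 0$, its zero count is at most its pole degree:
$$(r+s_r)(\mu_H-\mathfrak{N})\le ma+m\mathfrak{N}+1 .$$
Rearranging gives $\mathfrak{N}\ge\frac{(r+s_r)\mu_H-ma-1}{m+r+s_r}$, which is the claimed bound and contradicts our assumption. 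Note that this uses only $\mu_H-\mathfrak{N}$ zeros per orbit, so the wraparound is not needed.

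The step I expect to need the most care is the indexing. One must check that the recurrence for $j=0,\dots,\mu_H-\mathfrak{N}-1$ really gives $\mu_H-\mathfrak{N}$ \emph{distinct} zeros in each orbit, using the periodic indexing of the sequence $\{z(\sigma^j(P_t))\}_{j\ge0}$, even though Definition \ref{d1} states the range as $0\le j\le e-\mathfrak{N}-1$. One must also justify $\mathfrak{N}<\mu_H$ so that the nonvanishing argument for $w$ holds. Both follow from $a\ge 0$ and $m\ge1$.
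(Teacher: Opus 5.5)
Your proposal is correct and follows the paper's proof essentially step for step. It uses the same auxiliary function $\sigma^{-\mathfrak{N}}(z)-\Phi(z,\sigma^{-1}(z),\dots,\sigma^{-\mathfrak{N}+1}(z))$, proves it nonzero via the simple pole at $\sigma^{-\mathfrak{N}}(P_0)$, bounds its poles by $ma+m\mathfrak{N}+1$, and counts $(r+s_r)(\mu_H-\mathfrak{N})$ zeros; your contradiction set-up simply replaces the paper's separate dismissal of the case $\mathfrak{N}\geq\mu_H$. One small slip: the zeros are at $\sigma^{j}(P_t)$ for $0\le j\le\mu_H-\mathfrak{N}-1$, as you correctly say earlier, not at $\sigma^{\mathfrak{N}+j}(P_t)$, though this does not affect the count.
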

	\begin{proof}
		First, we show that the sequence $\mathcal{S}$ is nonzero. In fact, if $\mathcal{S}=\mathbf{0}$, then there exists $z\in\mathcal{L}\big(A+P_{0}-\sum^{r+s_{r}}_{i=1}\sum^{ \mu_{H}-1}_{j=0}\sigma^{-j}(P_{i})\big)$. This condition implies that
		$$
		\deg\left(A+P_{0}-\sum^{r+s_{r}}_{i=1}\sum^{\mu_{H}  -1}_{j=0}\sigma^{-j}(P_{i})\right)\geq 0,
		$$
		that is, $a+1- (r+s_{r}) \mu_{H} \geq 0$, which leads to a contradiction, \textit{i.e.}, $2\mathfrak{g}_{H}-1\leq a\leq (r+s_{r}-1) \mu_{H} -1$.
		
		Let $\Phi(x_{1},\ldots,x_{\mathfrak{N}})=\sum_{\mathrm{wt}(I)\leq m}\phi_{I}x_{1}^{e_{1}}\cdots x_{\mathfrak{N}}^{e_{\mathfrak{N}}}$ be a nonzero polynomial in $\mathbb{F}_{q}[x_{1},\ldots,x_{\mathfrak{N}}]$ with a total degree not exceeding $m$ that generates the sequence $\mathcal{S}$. 
		
		If $\mathfrak{N}\geq  \mu_{H} =\mathrm{L}(\mathcal{S})$, then the bound is trivially valid. We now focus on the cases $\mathfrak{N}\leq \mu_{H} -1$. By definition, we have
		\begin{displaymath}
s^{(t)}_{j+\mathfrak{N}}=\Phi(s^{(t)}_{j},\cdots,s^{(t)}_{j+\mathfrak{N}-1})
			\label{55}
		\end{displaymath}
		for each $0\leq j\leq \mu_{H}-1-\mathfrak{N}$. It is equivalent to
		\begin{displaymath}
			\begin{split}
				z(\sigma^{j+\mathfrak{N}}(P_{t}))=\Phi(z(\sigma^{j}(P_{t})),&z(\sigma^{j+1}(P_{t})),\\
				&\cdots,
				z(\sigma^{\mathfrak{N}+j-1}(P_{t}))),\ z\in \mathcal{L}(( A+P_{0})\setminus\mathcal{L}(A)
			\end{split}
		\end{displaymath}
		for all $1\leq t\leq r+s_{r}$ and $0\leq j\leq \mu_{H}  -\mathfrak{N}-1$.
		Consider the function
		\begin{displaymath}
			f=\sigma^{-\mathfrak{N}}(z)-\Phi(z,\sigma^{-1}(z),\dots,\sigma^{-\mathfrak{N}+1}(z)).
		\end{displaymath}
		Given that $\sigma^{-i}(P_{0})$ is a pole of $\sigma^{-i}(z)$, for each $i\geq 0$. $\sigma^{-\mathfrak{N}}(P_{0})$ serves as a pole of $\sigma^{\mathfrak{-N}}(z)$, but not a pole of $\Phi(z,\sigma^{-1}(z),\dots,\sigma^{-\mathfrak{N}+1}(z))$. It follows that $f$ is a nonzero function. 
		
		Furthermore, from the analysis of the orders of poles, we have
		\[
		f\in\mathcal{L}\bigl(mA+\sigma^{-\mathfrak{N}}(P_{0})+\textstyle\sum_{i=0}^{\mathfrak{N}-1}m\sigma^{-i}(P_{0})\bigr).
		\]

		Then $f(\sigma^{j}(P_{t}))=0$ holds for all $1\leq t\leq r+s_{r}$ and $0\leq j\leq  \mu_{H} -\mathfrak{N}-1$. Thus $f$ has at least $(r+s_{r})( \mu_{H}  -\mathfrak{N})$ zeros. Therefore, we have
		\begin{displaymath}
			\deg\left(mA+\sigma^{\mathfrak{-N}}(P_{0})+\sum_{i=0}^{\mathfrak{N}-1}m\sigma^{-i}(P_{0})-\sum_{t=1}^{r+s_{r}}\sum_{j=0}^{\mu_{H}   -\mathfrak{N}-1}\sigma^{j}(P_{t})\right)\geq 0.
		\end{displaymath}
		Then we have
		\begin{displaymath}
			am+1+\mathfrak{N}m-   (r+s_{r})( \mu_{H}  -\mathfrak{N})\geq 0\implies 
			\mathfrak{N}\geq \frac{(r+s_{r}) \mu_{H}  -am-1}{m+r+s_{r}}
		\end{displaymath}
		and we have the desired result.
	\end{proof}

	\begin{corollary}
		Let $\mathbb{F}_{r}$ be a finite field with $r$ elements, and let $\mathbb{F}_{q}, q=r^{2}$ be an extension of $\mathbb{F}_{r}$.    Consider a function field $F/\mathbb{F}_{r}$ with genus $\mathfrak{g}_{F}$ and $1+r+s_{r}, |s_{r}|\leq \mathfrak{g}_{F}\cdot\lfloor 2\sqrt{r}\rfloor$ rational places. Fix a place $Q\in\mathbb{P}_{F}$ of an odd degree $d\geq 3$ and $ 2\mathfrak{g}_{H}-1  \leq a\leq (\alpha-1) \mu_{H} -1$.
           Then, for any $\alpha\geq \alpha_{0}$, we can construct a truncated multi-sequence $$\mathcal{S}=\{\mathbf{s}^{(t)}:1\leq t\leq \alpha\}$$ with dimension $\alpha$, period $\mu_{H}$, linear complexity $\mathrm{LC}(\mathcal{S})=\mu_{H}$ and nonlinear complexity lower bounded by
		\begin{displaymath}
			\mathrm{NL}_{m}(\mathcal{S},\alpha)\geq \frac{\mu_{H}\alpha-ma-1}{  m+\alpha },\ \alpha\geq\alpha_{0}.
		\end{displaymath}
	\end{corollary}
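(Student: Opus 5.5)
The plan is to rerun the argument of Theorem \ref{1234} and of Theorem \ref{t1} with the index set of evaluation orbits shrunk from $\{1,\dots,r+s_{r}\}$ to $\{1,\dots,\alpha\}$. The hypothesis $a\leq(\alpha-1)\mu_{H}-1$ simply replaces $a\leq (r+s_{r}-1)\mu_{H}-1$. We may assume $\alpha\leq r+s_{r}$, since only $r+s_{r}$ orbits $\Theta_{1},\dots,\Theta_{r+s_{r}}$ are available. The function $z\in\mathcal{L}(A+P_{0})\setminus\mathcal{L}(A)$ is unchanged, and so is $\sigma$, the generator of $\mathrm{Gal}(H/L)$ from Lemma \ref{lll}.

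First, the linear complexity. I would invoke the Subfamilies corollary. Its condition $\alpha\geq 1+(a+1)/\mu_{H}$ is equivalent to $a\leq(\alpha-1)\mu_{H}-1$, and it holds for every $\alpha\geq\alpha_{0}$ because $\alpha_0=1+\lceil (a+1)/\mu_H\rceil$. Concretely, the proof of Theorem \ref{t1} goes through verbatim. Suppose a recurrence of length $\ell<\mu_{H}$ existed with $\lambda_{\ell}\neq 0$. Then the nonzero function $w_{\ell}=\sum_{i=0}^{\ell}\lambda_i\sigma^{-i}(z)$ lies in $\mathcal{L}(A+\sum_{i=0}^{\ell}\sigma^{-i}(P_0))$ and vanishes on the $\alpha\mu_{H}$ places of $\Theta_1\cup\cdots\cup\Theta_\alpha$. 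This forces $a+\ell+1\geq\alpha\mu_{H}$. Combined with $a\leq(\alpha-1)\mu_H-1$, that gives $\ell\geq\mu_H$, a contradiction. Hence $\mathrm{LC}(\mathcal{S})=\mu_{H}$.

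Second, the nonlinear complexity. Nonzeroness of the truncated multi-sequence follows as before: if every component vanished, $z$ would lie in a Riemann–Roch space of degree $a+1-\alpha\mu_{H}<0$. Take a generating polynomial $\Phi$ of total degree at most $m$ in $\mathfrak{N}$ variables. If $\mathfrak{N}\geq\mu_{H}$ the bound is trivial, since $(\alpha\mu_{H}-ma-1)/(m+\alpha)<\mu_{H}$. Otherwise set
\[
f=\sigma^{-\mathfrak{N}}(z)-\Phi\bigl(z,\sigma^{-1}(z),\dots,\sigma^{-\mathfrak{N}+1}(z)\bigr).
\]
This $f$ is nonzero because of its simple pole at $\sigma^{-\mathfrak{N}}(P_0)$, which no other term has. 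It lies in
\[
\mathcal{L}\Bigl(mA+\sigma^{-\mathfrak{N}}(P_0)+\sum_{i=0}^{\mathfrak{N}-1}m\sigma^{-i}(P_0)\Bigr),
\]
using Lemma \ref{l1} and the $\sigma$-invariance of $A$. It vanishes at $\sigma^{j}(P_t)$ for $1\leq t\leq\alpha$ and $0\leq j\leq\mu_H-\mathfrak{N}-1$. Comparing degrees gives $ma+1+m\mathfrak{N}\geq\alpha(\mu_H-\mathfrak{N})$, that is, $\mathfrak{N}\geq(\alpha\mu_H-ma-1)/(m+\alpha)$.

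The only delicate point is bookkeeping. One needs the zero places $\sigma^{j}(P_t)$ to be pairwise distinct and disjoint from the poles $\sigma^{-i}(P_0)$ and from $\mathrm{Supp}(A)$. Distinctness follows from complete splitting, Lemma \ref{lll}(2), which makes the orbits of distinct $\mathfrak{p}_t$ disjoint, each of full size $\mu_H$. Disjointness from $\mathrm{Supp}(A)$ holds because $\mathrm{Supp}(A)$ lies over $Q$, which has degree $d\geq 3$, while the $P_t$ and $P_0$ lie over rational places. Apart from this, the argument is a direct specialization of the earlier theorems.
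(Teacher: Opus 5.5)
Your proposal is correct and matches the paper's intended argument: the corollary is Theorem~\ref{1234} rerun with the evaluation orbits restricted to $\Theta_1,\dots,\Theta_\alpha$, with the linear complexity coming from Theorem~\ref{t1} via the Subfamilies corollary, and with $a\le(\alpha-1)\mu_H-1$ (equivalently $\alpha\ge\alpha_0$) replacing $a\le(r+s_r-1)\mu_H-1$. Your bookkeeping on the zero and pole places is sound, with one small correction: disjointness of orbits over distinct $\mathfrak{p}_t$ is automatic, and what complete splitting actually provides is that each orbit has full size $\mu_H$.
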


	\subsection{ High $\beta-$error Nonlinear Complexity}
	\begin{theorem}
		Let $\mathbb{F}_{r}$ be a finite field with $r$ elements, and let $\mathbb{F}_{q},q=r^{2}$ be an extension of $\mathbb{F}_{r}$.    Consider a function field $F/\mathbb{F}_{r}$ with genus $\mathfrak{g}_{F}$ and $1+r+s_{r}, |s_{r}|\leq \mathfrak{g}_{F}\cdot\lfloor 2\sqrt{r}\rfloor$ rational places. Fix a place $Q\in\mathbb{P}_{F}$ of odd degree $d\geq 3$ and $ 2\mathfrak{g}_{H}-1  \leq a\leq (r+s_{r}-1) \mu_{H} +\alpha_{0}-3$.   Then we can construct a multi-sequence $$\mathcal{S}=\{\mathbf{s}^{(t)}:1\leq t\leq r+s_{r}\}$$ with dimension $r+s_{r}$, period $\mu_{H}$, and we have the following results about $\beta-$error nonlinear complexity:
		
		$1)$ If $\beta\leq r+s_{r}-\alpha_{0}$,
		\begin{displaymath}
			\mathrm{NL}_{m,\beta}(\mathcal{S})\geq \frac{\mu_{H}(r+s_{r}-\beta)-ma-1}{m+ r+s_{r}-\beta   };
		\end{displaymath}

		$2)$ If $\beta\leq r+s_{r}-\alpha_{0}+1$,   $$  \mathrm{NL}_{m,\beta}(\mathcal{S})\geq   -1+\frac{\mu_{H}(r+s_{r})-ma+m-1}{ r+s_{r}-\alpha_{0}+1+m   };$$
		
		$3)$ If $\beta\leq r+s_{r}-\alpha_{0}+2$,  $$  \mathrm{NL}_{m,\beta}(\mathcal{S})\geq  -1+\min\left\{\frac{\mu_{H}(r+s_{r}-1)-ma+m-1}{ r+s_{r}-\alpha_{0}+m   },    \frac{\mu_{H}(r+s_{r})-ma+m-1}{ r+s_{r}-\alpha_{0}+2+m   }  \right\}.  $$
	\end{theorem}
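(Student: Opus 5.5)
The plan is to reduce each case to the argument of Theorem \ref{1234}, applied to a sub-multi-sequence made of those components in which no term has been changed, or in which only one term has been changed. Let $\mathcal{U}$ satisfy $d_{H}(\mathcal{S},\mathcal{U})\leq\beta$, and let $\Phi$ be a polynomial of total degree at most $m$ generating $\mathcal{U}$ with $\mathfrak{N}=\mathrm{NL}_{m}(\mathcal{U})$. We may assume $\mathfrak{N}\leq\mu_{H}-1$, since otherwise the bounds are trivial. Let $T\subseteq\{1,\dots,r+s_{r}\}$ be the set of components with $\mathbf{u}^{(t)}=\mathbf{s}^{(t)}$. Then $|T|\geq r+s_{r}-\beta$, because at most $\beta$ components are touched. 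As in Theorem \ref{1234}, set
$$f=\sigma^{-\mathfrak{N}}(z)-\Phi(z,\sigma^{-1}(z),\dots,\sigma^{-\mathfrak{N}+1}(z)).$$
This function is nonzero, because of the simple pole at $\sigma^{-\mathfrak{N}}(P_{0})$. It lies in $\mathcal{L}(mA+\sigma^{-\mathfrak{N}}(P_{0})+\sum_{i=0}^{\mathfrak{N}-1}m\sigma^{-i}(P_{0}))$, whose degree is $ma+1+m\mathfrak{N}$. On every untouched component $t\in T$, $f$ vanishes at $\sigma^{j}(P_{t})$ for $0\leq j\leq\mu_{H}-\mathfrak{N}-1$.

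\textbf{Case 1).} Counting only the zeros coming from $T$ gives $ma+1+m\mathfrak{N}\geq(r+s_{r}-\beta)(\mu_{H}-\mathfrak{N})$, which rearranges to the first bound. The hypothesis $\beta\leq r+s_{r}-\alpha_{0}$ ensures $|T|\geq\alpha_{0}$. By the subfamily corollary, the truncated family indexed by $T$ is then nonzero and has linear complexity $\mu_{H}$, so the argument of Theorem \ref{1234} applies to it. This is where the widened range $a\leq(r+s_{r}-1)\mu_{H}+\alpha_{0}-3$ has to be reconciled with the nonvanishing argument.

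\textbf{Cases 2) and 3).} Here $|T|$ may be only $\alpha_{0}-1$ or $\alpha_{0}-2$, so the previous count is too weak. I would also exploit the components that contain exactly one error, and these are the majority when $\beta$ is this large. A single changed term at position $k$ in component $t$ spoils the recurrence relation only for the windows $j$ with $j\leq k\leq j+\mathfrak{N}$. That removes at most $\mathfrak{N}+1$ of the $\mu_{H}-\mathfrak{N}$ zeros of $f$ on $\Theta_{t}$. I would therefore count zeros over all $r+s_{r}$ components, subtracting roughly $(\mathfrak{N}+1)$ for each error. Errors that pile up in one component are best treated as killing that component entirely, and the worst distribution of the errors determines the bound.

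In case 2), the extremal configuration is $r+s_{r}-\alpha_{0}+1$ components each carrying one error. The inequality to establish is
$$ma+1+m\mathfrak{N}\geq(r+s_{r})(\mu_{H}-\mathfrak{N})-(r+s_{r}-\alpha_{0}+1)(\mathfrak{N}+1)+\text{(boundary correction)}.$$
Solving it should give $\mathfrak{N}+1\geq(\mu_{H}(r+s_{r})-ma+m-1)/(r+s_{r}-\alpha_{0}+1+m)$, after the shift $\mathfrak{N}\mapsto\mathfrak{N}+1$ that produces the $-1$ and the $+m$ terms. In case 3) there are two extremal configurations. Either one component is lost entirely, which leaves $r+s_{r}-1$ components and $r+s_{r}-\alpha_{0}$ single-error components; or every error sits in a different component, giving $r+s_{r}-\alpha_{0}+2$ of them. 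The two inequalities give the two terms of the minimum.

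The main obstacle will be the bookkeeping of the zeros lost per error. It must be done so that the constants match the stated $-1$ and $+m-1$ exactly, which likely means using periodicity and counting windows cyclically. A second delicate point is justifying that $f\neq0$ and that the relevant degree comparison still forces a contradiction when $a$ exceeds $(r+s_{r}-1)\mu_{H}-1$ by up to $\alpha_{0}-2$. For that, I would first try the subfamily corollary with $\alpha_{0}$ clean or nearly clean components.
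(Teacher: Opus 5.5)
Your route is the paper's: case 1 comes from the untouched components via the truncated-family bound. Cases 2) and 3) come from counting zeros of $f=\sigma^{-\mathfrak{N}}(z)-\Phi(z,\dots,\sigma^{-\mathfrak{N}+1}(z))$ on components with one changed term, and the two configurations in 3) give the two terms of the minimum. The gap is that the step you defer is exactly the one that produces the stated constants, and your displayed inequality does not produce them. Solving $ma+1+m\mathfrak{N}\geq(r+s_r)(\mu_H-\mathfrak{N})-(r+s_r-\alpha_0+1)(\mathfrak{N}+1)$ gives the denominator $2(r+s_r)-\alpha_0+1+m$ instead of $r+s_r-\alpha_0+1+m$. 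So the ``boundary correction'' would have to be $+(r+s_r)\mathfrak{N}$, not a lower-order term.

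What closes it, and what the paper does (it writes the recurrence ``for $j\geq 0$''), is to count cyclically. Regard $\mathbf{s}^{(t)}=\{z(\sigma^{j}(P_t))\}_{j\geq0}$ as a periodic sequence on which the recurrence holds for all $j\geq0$, with the changed terms repeated in every period. Since $z(\sigma^{j+i}(P_t))$ depends only on $j+i \bmod \mu_H$, $f$ vanishes at all $\mu_H$ points of $\Theta_t$ for an untouched $t$. A single changed term spoils exactly the $\mathfrak{N}+1$ cyclic windows containing it, which are distinct because $\mathfrak{N}\leq\mu_H-1$, so $\mu_H-\mathfrak{N}-1$ zeros remain. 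Then $ma+1+m\mathfrak{N}\geq(r+s_r)\mu_H-(r+s_r-\alpha_0+1)(\mathfrak{N}+1)$ is exactly 2). The counts $(r+s_r-1)\mu_H-(r+s_r-\alpha_0)(\mathfrak{N}+1)$ and $(r+s_r)\mu_H-(r+s_r-\alpha_0+2)(\mathfrak{N}+1)$ give the two terms in 3). You should state this periodic reading explicitly. With the finite window $0\leq j\leq e-\mathfrak{N}-1$ of Definition \ref{d1}, an untouched component only contributes $\mu_H-\mathfrak{N}$ zeros, and the stated constants do not follow.

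Two further remarks. First, you do not need extremal configurations at all. Each changed term kills at most $\mathfrak{N}+1$ cyclic windows, so for any placement of the $\beta$ changes $f$ has at least $(r+s_r)\mu_H-\beta(\mathfrak{N}+1)$ zeros, whence $\mathfrak{N}+1\geq\frac{(r+s_r)\mu_H-ma+m-1}{\beta+m}$. This gives 2) and the second term of the minimum in 3) at once. It also handles placements leaving at least $\alpha_0$ components untouched. The paper sends those back to case 1, even though the case-1 bound can be far below the case-2 bound, so its own reduction there does not suffice.

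Second, your worry about the widened range of $a$ is unfounded. $f\neq0$ comes from the simple pole at $\sigma^{-\mathfrak{N}}(P_0)$ for every $\mathfrak{N}\leq\mu_H-1$, independently of $a$ and of any linear-complexity or nonvanishing statement. The degree inequality is itself the bound, so no contradiction is needed. Moreover, $a\geq(\alpha_0-2)\mu_H$ by the definition of $\alpha_0$, which keeps the bounds in 2) and 3) at most $\mu_H$, so $\mathfrak{N}\geq\mu_H$ is the trivial case. The appeal to the subfamily corollary in case 1 is harmless but unnecessary.
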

	
	\begin{proof}
		$1)$ Suppose that multi-sequence $\mathcal{S}_{1}$ is obtained from changing $\beta$ components in $\mathcal{S}$. It is obvious that at most $\beta$ sequences are affected in $\mathcal{S}$ when transmitting. Then we have $|\mathcal{S}\cap\mathcal{S}_{1}|\geq r+s_{r}-\beta\geq \alpha_{0}$. By the corollary, it follows
		\begin{displaymath}
			\mathrm{NL}_{m,\beta}(\mathcal{S})\geq \mathrm{NL}_{m,\beta}(\mathcal{S}\cap\mathcal{S}_{1})\geq\frac{\mu_{H}(r+s_{r}-\beta)-ma-1}{m+ r+s_{r}-\beta   }.
		\end{displaymath}
		Since $U$ was arbitrary, taking the minimum over all $U$ with $d_{H}(\mathcal{S},U)\leq\beta$ gives the result.
		
		$2)$ Suppose that multi-sequence $\mathcal{S}_{2}$ is obtained from changing $\beta$ components in $\mathcal{S}$. Then $|\mathcal{S}\cap\mathcal{S}_{2}|\geq \alpha_{0}-1$. Otherwise, if $|\mathcal{S}\cap\mathcal{S}_{2}|\geq \alpha_{0}$, we have the first result. Now we assume $|\mathcal{S}\cap\mathcal{S}_{2}|=\alpha_{0}-1 $ and it is equivalent that there are $r+s_{r}-\alpha_{0}+1$ sequences with exactly one component modified in $\mathcal{S}$ when transmitting through noise channels. For our construction, we can assume $ \mathcal{S}\cap\mathcal{S}_{2}=\{\mathbf{s}_{t}: 1\leq t\leq \alpha_{0}-1\}$ and denote modified components by $z(\sigma ^{\eta_{t}-1}(P_{t}))$ for $t=\alpha_{0},\cdots,r+s_{r}$ and $1\leq \eta_{t}\leq \mu_{H}$. Denote by $\mathfrak{N}_{2}$ the nonlinear complexity of $\mathcal{S}_{2}$. Then for any $t=1,2,\cdots,\alpha_{0}-1$ and $j\geq 0$, we have 
		\begin{displaymath}
			\begin{split}
				z(\sigma^{j+\mathfrak{N}_{2}}(P_{t}))=\Phi(z(\sigma^{j}(P_{t})),&z(\sigma^{j+1}(P_{t})),\\
				&\cdots,
				z(\sigma^{\mathfrak{N}_{2}+j-1}(P_{t}))),\ z\in \mathcal{L}(A+P_{0})\setminus\mathcal{L}(A).
			\end{split}
		\end{displaymath}
		For any $t=\alpha_{0},\alpha_{0}+1,\cdots,r+s_{r}$ and $\eta_{t}+1\leq j\leq \mu_{H}+\eta_{t}-1-\mathfrak{N}_{2}$, we also have 
		\begin{displaymath}
			\begin{split}
				z(\sigma^{j+\mathfrak{N}_{2}}(P_{t}))=\Phi(z(\sigma^{j}(P_{t})),&z(\sigma^{j+1}(P_{t})),\\
				&\cdots,
				z(\sigma^{\mathfrak{N}_{2}+j-1}(P_{t}))),\ z\in \mathcal{L}(A+P_{0})\setminus\mathcal{L}(A).
			\end{split}
		\end{displaymath}
		It suffices to consider $f=\sigma^{\mathfrak{-N}_{2}}(z)-\Phi(z,\sigma^{-1}(z),\cdots,\sigma^{-\mathfrak{N}_{2}+1}(z))$. It is obvious
		\begin{small}
			\begin{displaymath}
				f\in\mathcal{L}\bigl(mA+\sigma^{-\mathfrak{N}_{2}}(P_{0})+\textstyle\sum_{i=0}^{\mathfrak{N}_{2}-1}m\sigma^{-i}(P_{0})-\sum^{\alpha_{0}-1}_{t=1}\sum^{\mu_{H}-1}_{i=0}\sigma^{i}(P_{t})-\sum^{r+s_{r}}_{t=\alpha_{0}}\sum^{\mu_{H}+\eta_{t}-\mathfrak{N}_{2}-1}_{i=\eta_{t}+1}\sigma^{i}(P_{t})\bigr).
			\end{displaymath}
		\end{small}
		Then we have
		\begin{small}
			\begin{displaymath}
				\deg(mA+\sigma^{-\mathfrak{N}_{2}}(P_{0})+\textstyle\sum_{i=0}^{\mathfrak{N}_{2}-1}m\sigma^{-i}(P_{0})-\sum^{\alpha_{0}-1}_{t=1}\sum^{\mu_{H}-1}_{i=0}\sigma^{i}(P_{t})-\sum^{r+s_{r}}_{t=\alpha_{0}}\sum^{\mu_{H}+\eta_{t}-\mathfrak{N}_{2}-1}_{i=\eta_{t}+1}\sigma^{i}(P_{t}))\geq 0
			\end{displaymath}
		\end{small}
		and $ma+1+m\mathfrak{N}_{2}-(\alpha_{0}-1)\mu_{H}-(r+s_{r}-\alpha_{0}+1)(\mu_{H}-\mathfrak{N}_{2}-1)\geq 0$ \textit{i.e.}
		\begin{displaymath}
			\mathfrak{N}_{2}\geq -1+\frac{\mu_{H}(r+s_{r})-ma+m-1}{ r+s_{r}-\alpha_{0}+1+m   }.
		\end{displaymath}
		Let $\mathcal{S}_{2}$ be transversal through all multi-sequences in $\mathcal{S}$, then the claim holds for any $\beta\leq r+s_{r}-\alpha_{0}$.
		
		$3)$ Suppose that multi-sequence $\mathcal{S}_{3}$ is obtained from changing $\beta=r+s_{r}-\alpha_{0}+2$ components in $\mathcal{S}$. Then $|\mathcal{S}\cap\mathcal{S}_{3}|\geq \alpha_{0}-2$. Denote by $\mathfrak{N}_{3}$ the nonlinear complexity of $\mathcal{S}_{3}$.
		
		First, if $|\mathcal{S}\cap\mathcal{S}_{3}|\geq \alpha_{0}$, we have the first result.
In 
addition, if $|\mathcal{S}\cap\mathcal{S}_{3}|=\alpha_{0}-1 $, then there are $r+s_{r}-\alpha_{0}+1$ modified sequences, which means that there exists a unique sequence with $2$ modified components. By similar discussions above, the function $f=\sigma^{-\mathfrak{N}_{3}}(z)-\Phi(z,\sigma^{-1}(z),\cdots,\sigma^{-\mathfrak{N}_{3}+1}(z))$ is in the Riemann-Roch space
		\begin{displaymath}
			\mathcal{L}\bigl(mA+\sigma^{-\mathfrak{N}_{3}}(P_{0})+\textstyle\sum_{i=0}^{\mathfrak{N}_{3}-1}m\sigma^{-i}(P_{0})-\sum^{\alpha_{0}-1}_{t=1}\sum^{\mu_{H}-1}_{i=0}\sigma^{i}(P_{t})-\sum^{r+s_{r}+1}_{t=\alpha_{0}}\sum^{\mu_{H}+\eta_{t}-\mathfrak{N}_{3}-1}_{i=\eta_{t}+1}\sigma^{i}(P_{t})\bigr).
		\end{displaymath}
		By \begin{small}$$\deg\left( mA+\sigma^{-\mathfrak{N}_{3}}(P_{0})+\textstyle\sum_{i=0}^{\mathfrak{N}_{3}-1}m\sigma^{-i}(P_{0})-\sum^{\alpha_{0}-1}_{t=1}\sum^{\mu_{H}-1}_{i=0}\sigma^{i}(P_{t})-\sum^{r+s_{r}+1}_{t=\alpha_{0}}\sum^{\mu_{H}+\eta_{t}-\mathfrak{N}_{3}-1}_{i=\eta_{t}+1}\sigma^{i}(P_{t}) \right)\geq 0  $$\end{small}, we have
		\begin{displaymath}
			\mathfrak{N}_{3}\geq  -1+\frac{\mu_{H}(r+s_{r}-1)-ma+m-1}{ r+s_{r}-\alpha_{0}+m   } .
		\end{displaymath}
		
		Finally, if $|\mathcal{S}\cap\mathcal{S}_{3}|= \alpha_{0}-2$, there are $r+s_{r}-\alpha_{0}+2$ sequences with exactly one component modified in $\mathcal{S}$ when they are transmitted in noise channels. By similar discussions as before, $f=\sigma^{-\mathfrak{N}_{3}}(z)-\Phi(z,\sigma^{-1}(z),\cdots,\sigma^{-\mathfrak{N}_{3}+1}(z))$ lies in the Riemann-Roch space
		\begin{displaymath}
			\mathcal{L}\bigl(mA+\sigma^{-\mathfrak{N}_{3}}(P_{0})+\textstyle\sum_{i=0}^{\mathfrak{N}_{3}-1}m\sigma^{-i}(P_{0})-\sum^{\alpha_{0}-2}_{t=1}\sum^{\mu_{H}-1}_{i=0}\sigma^{i}(P_{t})-\sum^{r+s_{r}+2}_{t=\alpha_{0}}\sum^{\mu_{H}+\eta_{t}-\mathfrak{N}_{3}-1}_{i=\eta_{t}+1}\sigma^{i}(P_{t})\bigr)
		\end{displaymath}
		and we have 
		$mA+1+m\mathfrak{N}_{3}-(\alpha_{0}-2)\mu_{H}-(r+s_{r}-\alpha_{0}+2)(\mu_{H}-\mathfrak{N}_{3}-1)\geq 0$ \textit{i.e.}
		\begin{displaymath}
			\mathfrak{N}_{3}\geq-1+ \frac{\mu_{H}(r+s_{r})-ma+m-1}{ r+s_{r}-\alpha_{0}+2+m   }.
		\end{displaymath}
		Then we have 
		\begin{displaymath}
			\begin{split}
				&\mathrm{NL}_{m,\beta}(\mathcal{S}_{3})=\mathrm{NL}_{m}(\mathcal{S}_{3})\geq\\
				&-1+\min\left\{\frac{\mu_{H}(r+s_{r}-1)-ma+m-1}{ r+s_{r}-\alpha_{0}+m   },    \frac{\mu_{H}(r+s_{r})-ma+m-1}{ r+s_{r}-\alpha_{0}+2+m   }  \right\}
			\end{split}
		\end{displaymath}
		for all $\beta\leq r+s_{r}-\alpha_{0}+2$.
	\end{proof}

	\begin{example}(Cyclotomic Function Field)
    Consider the Carlitz module defined by $\psi_{T}(u)=u^{q}+Tu$ with $u\in\overline{\mathbb{F}_{q}(T)}$, and we take an irreducible polynomial $p(T)$ of odd degree $d$ over $\mathbb{F}_{\sqrt{q}}$. Let $K=\mathbb{F}_{q}(T)$ be a rational function field. Then we have a cyclotomic function field $K(\Lambda_{p(T)})$ (see \cite{3}). Furthermore, we obtained the descent function field $H$ which is fixed by $(\mathbb{F}_{\sqrt{q}}[T]/(p(T))^{*}\cdot\mathbb{F}_{q}^{*}$. Then the nonlinear complexity of $\mathcal{S}$ is lower bounded by
			\begin{displaymath}
				\mathrm{NL}_{m}(\mathcal{S})\geq\frac{ \sqrt{q}}{m+\sqrt{q}}\times\frac{\sqrt{q}^{d}+1}{\sqrt{q}+1} - \frac{ ma+1}{m+\sqrt{q}},
			\end{displaymath}
where $a=\deg A=\gamma d$ in this case.
	\end{example}
    \begin{remark}
        The nonlinear complexity of $\mathcal{S}$ given in \cite{3} is lower bounded by
\begin{displaymath}
				\mathrm{NL}_{m}(\mathcal{S})\geq\frac{ \sqrt{q}-d}{m+\sqrt{q}}\times\frac{\sqrt{q}^{d}+1}{\sqrt{q}+1} + \frac{d-1}{m+\sqrt{q}}
			\end{displaymath}
        and it is  derived from the Riemann-Roch space $$\mathcal{L}\left((\mu_{H}-1)Q+\sigma^{\mathfrak{n}_{3}}(P_{0})+\sum^{\mathfrak{N}_{3}-1}_{i=1}m\sigma^{i}(P_{0})-\sum^{\sqrt{q}}_{i=1}\sum^{\mu_{H}-\mathfrak{N}_{3}-1}_{j=0}\sigma^{-j}(P_{i})\right).$$ Under the total degree convention used in this paper, the corresponding pole estimate becomes $$\mathcal{L}\left(m(\mu_{H}-1)Q+\sigma^{\mathfrak{n}_{3}}(P_{0})+\sum^{\mathfrak{N}_{3}-1}_{i=1}m\sigma^{i}(P_{0})-\sum^{\sqrt{q}}_{i=1}\sum^{\mu_{H}-\mathfrak{N}_{3}-1}_{j=0}\sigma^{-j}(P_{i})\right).$$ 
    \end{remark}

\begin{example}\text{(Elliptic Function Field)}
		Consider a maximal elliptic function field $F/\mathbb{F}_{\sqrt{q}}$ with the number of $\mathbb{F}_{\sqrt{q}}-$rational places given by $\sqrt{q}+2\sqrt[4]{q} +1$, see \cite{229}.
        The genus $\mathfrak{g}_{H}$ is bounded by
        \begin{displaymath}
        \mathfrak{g}_{H}\leq \frac{d}{2}\left( \frac{ \sqrt{q} ^{d}-\sqrt{q}}{  \sqrt{q}+1}  \right)+1.
        \end{displaymath}

        Choose a divisor $A=\gamma \mathcal{R}_{Q}$ with odd degree $d\geq 3$. We can construct a multi-sequence $\mathcal{S}$  with nonlinear complexity lower bounded by
		\begin{small}
			\begin{displaymath}
				\mathrm{NL}_{m}(\mathcal{S})\geq \frac{  (\sqrt{q}+2 \sqrt[4]{q}  )  }{\sqrt{q}+2\sqrt[4]{q}  +m} \frac{ \sqrt{q}  ^{d}+1}{ \sqrt{q} +1} -  \frac{  ma+1}{\sqrt{q}+2\sqrt[4]{q}  +m} 
			\end{displaymath}
		\end{small}
		and $\beta-$error nonlinear complexity of $\mathcal{S}$ is divided by
		
		$1)$   If $\beta\leq  \sqrt{q}+2\sqrt[4]{q}  -\alpha_{0}$,
		\begin{small}
			\begin{displaymath}
				\mathrm{NL}_{m,\beta}(\mathcal{S})\geq \frac{\mu_{H}( \sqrt{q}+2\sqrt[4]{q}-\beta)  -ma-1}{m+  \sqrt{q}+2\sqrt[4]{q} -\beta };  
			\end{displaymath}
		\end{small}
		
		$2)$ If $\beta\leq \sqrt{q}+2\sqrt[4]{q}+1  -\alpha_{0}$
		\begin{small}
			\begin{displaymath}
				\mathrm{NL}_{m,\beta}(\mathcal{S})\geq -1+\frac{\mu_{H}(\sqrt{q}+2\sqrt[4]{q})-ma+m-1}{ \sqrt{q}+2\sqrt[4]{q}+1-\alpha_{0}+m   };  
			\end{displaymath}
		\end{small}
		
		$3)$ If $\beta\leq \sqrt{q}+2\sqrt[4]{q}+2-\alpha_{0}$,
		\begin{small}
			\begin{displaymath}
				\mathrm{NL}_{m,\beta}(\mathcal{S})\geq  -1+\min\left\{\frac{\mu_{H}(  \sqrt{q}+2\sqrt[4]{q}-1)-ma+m-1}{ \sqrt{q}+2\sqrt[4]{q} - \alpha_{0}  +m   },    \frac{\mu_{H}( \sqrt{q}+2\sqrt[4]{q})-ma+m-1}{ \sqrt{q}+2\sqrt[4]{q} +2- \alpha_{0} +m   }  \right\}.  
			\end{displaymath}
		\end{small}

	\end{example}

	\begin{example}\text{(Hyperelliptic Function Field)}
		Consider a hyperelliptic function field $F/\mathbb{F}_{\sqrt{q}}$ with the defining equation $$\mathcal{Y}:y^{2}=x^{t}+1.$$ Assume $t\mid \sqrt{q}+1$. Then $F$ is a maximal function field and has genus $\mathfrak{g}_{F}=\frac{t-1}{2}$ with $\sqrt{q}+(t-1)\sqrt[4]{q} +1$ $\mathbb{F}_{\sqrt{q}}-$rational places, see \cite{229}.
        The genus $\mathfrak{g}_{H}$ is bounded by
        \begin{displaymath}
        \mathfrak{g}_{H}\leq \frac{t-3}{2}\frac{ \sqrt{q} ^{d}+1}{  \sqrt{q}+1}+\frac{d}{2}\left( \frac{ \sqrt{q} ^{d}-\sqrt{q}}{  \sqrt{q}+1}  \right)+1.
        \end{displaymath}

        Choose a divisor $A=\gamma \mathcal{R}_{Q}$ with odd degree $d\geq 3$. We can construct a multi-sequence $\mathcal{S}$  with nonlinear complexity lower bounded by
		\begin{small}
			\begin{displaymath}
				\mathrm{NL}_{m}(\mathcal{S})\geq \frac{  (\sqrt{q}+(t-1) \sqrt[4]{q}  )  }{\sqrt{q}+(t-1)\sqrt[4]{q}  +m} \frac{ \sqrt{q}  ^{d}+1}{ \sqrt{q} +1} -  \frac{  ma+1}{\sqrt{q}+(t-1)\sqrt[4]{q}  +m} .
			\end{displaymath}
		\end{small}
		The $\beta-$error nonlinear complexity of $\mathcal{S}$ is divided by
		
		$1)$   If $\beta\leq  \sqrt{q}+(t-1)\sqrt[4]{q}  -\alpha_{0}$,
		\begin{small}
			\begin{displaymath}
				\mathrm{NL}_{m,\beta}(\mathcal{S})\geq \frac{\mu_{H}( \sqrt{q}+(t-1)\sqrt[4]{q}-\beta)  -ma-1}{m+  \sqrt{q}+(t-1)\sqrt[4]{q}-\beta  };  
			\end{displaymath}
		\end{small}
		
		$2)$ If $\beta\leq \sqrt{q}+(t-1)\sqrt[4]{q}+1  -\alpha_{0}$
		\begin{small}
			\begin{displaymath}
				\mathrm{NL}_{m,\beta}(\mathcal{S})\geq -1+\frac{\mu_{H}(\sqrt{q}+(t-1)\sqrt[4]{q})-ma+m-1}{ \sqrt{q}+(t-1)\sqrt[4]{q}+ 1-\alpha_{0} +m   };  
			\end{displaymath}
		\end{small}
		
		$3)$ If $\beta\leq \sqrt{q}+(t-1)\sqrt[4]{q}+2-\alpha_{0}$,
		\begin{small}
			\begin{displaymath}
				\mathrm{NL}_{m,\beta}(\mathcal{S})\geq  -1+\min\left\{\frac{\mu_{H}(  \sqrt{q}+(t-1)\sqrt[4]{q}-1)-ma+m-1}{ \sqrt{q}+(t-1)\sqrt[4]{q} -\alpha_{0}  +m   },    \frac{\mu_{H}( \sqrt{q}+(t-1)\sqrt[4]{q})-ma+m-1}{ \sqrt{q}+(t-1)\sqrt[4]{q} +2- \alpha_{0}  +m   }  \right\}.  
			\end{displaymath}
		\end{small}

	\end{example}

	\section{ Asymptotic families via function field towers}
	\label{sec:5}
	In this section, we provide some asymptotic parameters via function field towers and Ihara's constant. 
\begin{lemma}\label{l7} (Lemma 2.1 (iv), \cite{41})
    Let $F/\mathbb{F}_{r}$ be a function field with the genus $\mathfrak{g}_{F}$ and constant field $\mathbb{F}_{r}$. If there exists a positive integer $d$ with $$d\geq \log_{r}(j\log j+1)+\sqrt{\mathfrak{g}_{F}},$$ then the number of places with degree $d$ is at least $j$.
\end{lemma}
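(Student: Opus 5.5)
The plan is to derive the statement from the Weil (Hasse–Weil) estimate for the number of places of a fixed degree, followed by an elementary comparison showing that the hypothesis on $d$ makes that lower estimate at least $j$. Let $B_d$ denote the number of places of degree $d$ of $F/\mathbb{F}_r$. Möbius inversion of the point counts $N_n = r^n+1-\sum_{i=1}^{2\mathfrak{g}_F}\omega_i^n$, with $|\omega_i|=\sqrt r$, gives the standard bound
\begin{displaymath}
\left|B_d-\frac{r^d}{d}\right| \le (2+7\mathfrak{g}_F)\frac{r^{d/2}}{d}.
\end{displaymath}
This is Corollary 5.2.10 in \cite{7}, and I would quote it rather than rederive it. Consequently $B_d \ge \frac{r^{d/2}}{d}\bigl(r^{d/2}-(2+7\mathfrak{g}_F)\bigr)$. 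If that proves too crude for the constants, I would fall back on the sharper form $B_d\ge \frac{1}{d}\bigl(r^d-(2\mathfrak{g}_F+1)r^{d/2}\cdot c\bigr)$ with an explicit small $c$ coming from the tail $\sum_{e\mid d,e<d}$.

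Next I would turn the hypothesis into a lower bound on $r^{d/2}$. Write $d=\log_r(j\log j+1)+\sqrt{\mathfrak{g}_F}+\delta$ with $\delta\ge0$. Then
\begin{displaymath}
r^{d}\ge (j\log j+1)\, r^{\sqrt{\mathfrak{g}_F}}.
\end{displaymath}
The factor $r^{\sqrt{\mathfrak{g}_F}}$ is what absorbs the genus term. Since $r\ge2$, we have $r^{\sqrt{\mathfrak g_F}/2}$ growing faster than $2+7\mathfrak g_F$ once $\mathfrak g_F$ is large. For small genus I would instead use $j\log j+1$ to supply the slack. The aim is to show $r^{d/2}-(2+7\mathfrak g_F)\ge \tfrac12 r^{d/2}$, which reduces the bound to $B_d\ge r^d/(2d)$. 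It then remains to prove $r^d/(2d)\ge j$. Because $d$ enters only logarithmically, $r^d\ge (j\log j+1) r^{\sqrt{\mathfrak g_F}}$ should dominate $2dj$. I would verify this by bounding $d\le \log_r(j\log j+1)+\sqrt{\mathfrak g_F}+\delta$ and checking that the function $x\mapsto r^x/x$ is increasing for $x\ge 1/\ln r$.

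The main obstacle is the constant bookkeeping in the small cases: small $j$ (for example $j=1$, where $j\log j+1=1$), small genus, and $r=2$. In these cases the factor $\tfrac12$ and the constant $2+7\mathfrak g_F$ are not absorbed automatically. For $\mathfrak g_F=0$ I would handle things directly, since every $d\ge1$ has a place of degree $d$. For $j=1$ I would use the known fact (Corollary 5.2.10(c) of \cite{7}) that a place of degree $d$ exists whenever $2\mathfrak g_F+1\le r^{(d-1)/2}(r^{1/2}-1)$. Since this statement is quoted from \cite{41}, I would in the end cite that lemma for the precise constants and present the argument above as its justification.
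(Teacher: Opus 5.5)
Your argument cannot be completed: read literally, the statement is false for small genus. So the cases you postpone as ``constant bookkeeping'' cannot be closed, and citing \cite{41} ``for the precise constants'' simply assumes the result. Take the elliptic function field $F=\mathbb{F}_2(x,y)$ with $y^2+y=x^3+x$. It has $N_1=5$, so its Frobenius eigenvalues are $-1\pm i$, which gives $N_2=N_3=5$. You can also check this directly: $x^3+x$ has absolute trace $1$ for every $x\in\mathbb{F}_4\setminus\mathbb{F}_2$ and every $x\in\mathbb{F}_8\setminus\mathbb{F}_2$, so there are no new solutions. Hence $F$ has no place of degree $2$ or $3$. For $j=1$ one has $j\log j+1=1$ in any base, so $d=3$ (odd, as in the paper's application) satisfies $d\ge \log_2 1+\sqrt{\mathfrak{g}_F}=1$, yet the conclusion fails. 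Your genus-$0$ remark also covers only $j=1$: $\mathbb{F}_2(x)$ has a single place of degree $2$, although $d=2\ge\log_2(2\log 2+1)$ holds for $j=2$.

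The step that breaks is the reduction $r^{d/2}-(2+7\mathfrak{g}_F)\ge\tfrac12 r^{d/2}$. It needs $d\ge 2\log_r(4+14\mathfrak{g}_F)$, while the hypothesis only gives $d\ge\sqrt{\mathfrak{g}_F}+\log_r(j\log j+1)$. For $j=1$ the factor $j\log j+1$ supplies no slack. Moreover, $\sqrt{\mathfrak{g}_F}$ overtakes $2\log_r(4+14\mathfrak{g}_F)$ only for large genus; for $r=2$ this happens roughly from $\mathfrak{g}_F\approx 700$ on. Your fallback, Corollary 5.2.10(c) of \cite{7}, has the same defect: at $r=2$, $\mathfrak{g}_F=1$, $d=3$ it would require $3\le 2(\sqrt2-1)$.

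What your Weil-bound computation does prove is an asymptotic version, and that is all the paper uses. The paper gives no proof of its own, and in Section~\ref{sec:5} it applies the lemma only with $j=1$ and $\mathfrak{g}_i\to\infty$. The version you can prove is this: for fixed $r$ and $j$ there is $g_0$ such that whenever $\mathfrak{g}_F\ge g_0$ and $d\ge\sqrt{\mathfrak{g}_F}+\log_r(j\log j+1)$, one has $r^{d/2}\ge r^{\sqrt{\mathfrak{g}_F}/2}\ge 4+14\mathfrak{g}_F$. Hence $B_d\ge r^d/(2d)$, and $r^d/(2d)\ge j$ follows from the monotonicity of $x\mapsto r^x/x$. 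You should restate the lemma with this largeness assumption and prove that version. For $j=1$, Corollary 5.2.10(c) alone already suffices once $r^{(\sqrt{\mathfrak{g}_F}-1)/2}(r^{1/2}-1)\ge 2\mathfrak{g}_F+1$.
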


    \begin{theorem}
        Let $\mathcal{F}=(F_{i}/\mathbb{F}_{r})_{i\geq 1}$ be a function field tower with genus $\mathfrak{g}_{i}\to\infty$ and denote by $N_{i}=N(F_{i})$ the number of rational places of $F_{i}$. Assume the existence of the following ratio, \textit{i.e.}
        \begin{displaymath}
            \lambda(\mathcal{F})=\lim\inf_{i\to \infty}\frac{N_{i}}{\mathfrak{g}_{i}}>2.
        \end{displaymath}
        Then, for sufficiently large $i$, we can choose an odd degree $Q_{i}$ of degree $d_{i}=O(\sqrt{\mathfrak{g}_{i}})$ and construct a multi-sequence $\mathcal{S}_{i}$ over $\mathbb{F}_{q}$ with 
        dimension $N_{i}-1$, period $$\mu_{i}=\frac{r^{d_{i}}+1}{r+1},$$ linear complexity $\mu_{i}$. Moreover, for every fixed integer $m\geq 1$, the nonlinear complexity is lower bounded by
        \begin{displaymath}
            \lim\inf_{i\to\infty}\frac{\mathrm{NL}_{m}(\mathcal{S}_{i})}{\mu_{i}}\geq 1-\frac{2m}{\lambda(\mathcal{F})}.
        \end{displaymath}
        In particular, the nonlinear complexity is bounded away from zero whenever $\lambda(\mathcal{F})>2m$.
\begin{proof}
   For each suﬀiciently large $i$, let $d_{i}$ be the smallest odd integer that satisfies $d_{i}\geq \sqrt{\mathfrak{g}_{i}}$.
   Hence $d_{i}=O( \sqrt{\mathfrak{g}_{i}} )$. By Lemma \ref{l7}, $F_{i}$ has a place of degree $d_{i}$ for sufficiently large $i$. 

   Denote by the constant field extension $L_{i}= \mathbb{F}_{q}\cdot F_{i} $. Let $H_{i}/L_{i}$ be the cyclic narrow ray extension tower, and let
    \begin{displaymath}
        \mu_{i}=\frac{r^{d_{i}}+1}{r+1}.
    \end{displaymath}

Write $\mathcal{R}_{i}$ for the reduced divisor lying above $Q_{i}$ and $\Delta_{i}=\deg\mathcal{R}_{i}$. Since $\Delta_{i}=\mu_{i}d_{i}/e_{Q_{i}}$, we have $\Delta_{i}\leq \mu_{i}d_{i}$. Choose the minimum $\gamma_{i}$ such that
\begin{displaymath}
    a_{i}:=\gamma_{i}\Delta_{i}\geq 2\mathfrak{g}_{H_{i}}-1
\end{displaymath}
and
\begin{displaymath}
    a_{i}<2\mathfrak{g}_{H_{i}}-1+\Delta_{i}.
\end{displaymath}
By the Hurtwitz genus formula, we have
\begin{displaymath}
    2\mathfrak{g}_{H_{i}}-1\leq 2\mu_{i}(\mathfrak{g}_{i}-1)+d_{i}(\mu_{i}-1)+1
\end{displaymath}
together with
$$\Delta_{i}\leq \mu_{i}d_{i},$$
 and we obtain
\begin{displaymath}
    \frac{a_{i}}{\mu_{i}}\leq2\mathfrak{g}_{i}+2d_{i}+O(1).
\end{displaymath}
Consequently, the following upper bound holds
\begin{displaymath}
    \lim\sup_{i\to\infty}\frac{a_{i}}{\mu_{i}\mathfrak{g}_{i}}\leq 2.
\end{displaymath}

Let $b_{i}=N_{i}-1$. Since $\lambda(\mathcal{F})>2$, $\frac{a_{i}}{\mu_{i}}\leq 2\mathfrak{g}_{i}+2d_{i}+O(1)$ and $d_{i}=O(\sqrt{\mathfrak{g}_{i}})$. Then we have 
\begin{displaymath}
    a_{i}\leq(b_{i}-1)\mu_{i}-1,
\end{displaymath}
for sufficiently large $i$. Then by Theorem \ref{t1}, we have $\mathrm{LC}(\mathcal{S}_{i})=\mu_{i}$.

We also have the following bound for nonlinear complexity 
\begin{displaymath}
    \frac{\mathrm{NL}_{m}(\mathcal{S}_{i})}{\mu_{i}}\geq\frac{b_{i}-m(a_{i}/\mu_{i})-1/\mu_{i}}{m+b_{i}}.
\end{displaymath}
Together with $\lim\inf b_{i}/\mathfrak{g}_{i}=\lambda(\mathcal{F})$ and $\lim\sup a_{i}/(\mu_{i}\mathfrak{g}_{i})\leq 2$, we have the desired result.
\end{proof}
    \end{theorem}
	
	\begin{corollary}
	   Let $q$ be a fourth power and $\mathcal{F}=(F_{i}/\mathbb{F}_{\sqrt{q}})_{i\geq 1}$ be a Garcia-Stichtenoth tower, which reaches the Drinfeld-Vladut bound; therefore, we have
        \begin{displaymath}
            \lim_{i\to\infty}\frac{N(F_{i})}{\mathfrak{g}_{F_{i}}}=\sqrt[4]{q}-1
        \end{displaymath}
If $\sqrt[4]{q}-1>2$, then we have
\begin{displaymath}
    \frac{\mathrm{LC}(\mathcal{S}_{i})}{\mu_{i}}=1
\end{displaymath}
and, for each fixed $m\geq 1$, we have
       \begin{displaymath}
            \lim\inf_{i\to\infty}\frac{\mathrm{NL}_{m}(\mathcal{S}_{i})}{\mu_{i}}\geq 1-\frac{2m}{\sqrt[4]{q}-1}.
       \end{displaymath} 
    Then the lower bound of the normalized nonlinear complexity is positive whenever $\sqrt[4]{q}-1>2m$, and for fixed $m$, the bound tends to $1$ as $q\to\infty$ through fourth powers.
\end{corollary}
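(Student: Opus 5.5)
The corollary is a specialization of the preceding asymptotic theorem, so I plan to verify its hypothesis for the Garcia--Stichtenoth tower over $\mathbb{F}_{r}$ with $r=\sqrt{q}$ and then read off the conclusions. Since $q$ is a fourth power, $r=\sqrt{q}$ is a square. By the proposition on Garcia--Stichtenoth towers (Section 7.4 of \cite{7}), applied over $\mathbb{F}_{r}$, the tower $\mathcal{F}=(F_{i}/\mathbb{F}_{r})$ has genera $\mathfrak{g}_{F_{i}}\to\infty$, given by the explicit formulas there with $\sqrt{r}=\sqrt[4]{q}$. Its ratio satisfies $\lim_{i}N(F_{i})/\mathfrak{g}_{F_{i}}=\sqrt{r}-1=\sqrt[4]{q}-1$, which is the Drinfeld--Vl\u{a}du\c{t} value $A(r)$. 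Hence $\lambda(\mathcal{F})$ exists as an honest limit and equals $\sqrt[4]{q}-1$.

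The assumption $\sqrt[4]{q}-1>2$ is exactly the hypothesis $\lambda(\mathcal{F})>2$ of the theorem, so the theorem applies. It provides, for all sufficiently large $i$, the following:
\begin{itemize}
\item an odd-degree place $Q_{i}$ of degree $d_{i}=O(\sqrt{\mathfrak{g}_{F_{i}}})$, which exists by Lemma \ref{l7};
\item the cyclic descent field $H_{i}/L_{i}$ from Lemma \ref{lll}, of degree $\mu_{i}=(r^{d_{i}}+1)/(r+1)$;
\item a multi-sequence $\mathcal{S}_{i}$ of dimension $N(F_{i})-1$ and period $\mu_{i}$.
\end{itemize}
The theorem also gives $\mathrm{LC}(\mathcal{S}_{i})=\mu_{i}$; this came from Theorem \ref{t1} once the admissibility $a_{i}\le (b_{i}-1)\mu_{i}-1$ was checked. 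That yields $\mathrm{LC}(\mathcal{S}_{i})/\mu_{i}=1$. The theorem further gives $\liminf_{i}\mathrm{NL}_{m}(\mathcal{S}_{i})/\mu_{i}\ge 1-2m/\lambda(\mathcal{F})$, and substituting $\lambda(\mathcal{F})=\sqrt[4]{q}-1$ gives the stated inequality.

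The final remarks are elementary. The bound $1-2m/(\sqrt[4]{q}-1)$ is positive exactly when $\sqrt[4]{q}-1>2m$. For fixed $m$ it tends to $1$ as $q\to\infty$ through fourth powers, because $2m/(\sqrt[4]{q}-1)\to0$. Since the normalized nonlinear complexity is at most $\mathrm{LC}/\mu_{i}=1$ by Definition \ref{d1}, the limit is sharp.

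The only step I expect to need care is confirming that the tower's base field matches what the theorem requires. The sequences live over $\mathbb{F}_{q}=\mathbb{F}_{r^{2}}$ while the tower is defined over $\mathbb{F}_{r}$. The Garcia--Stichtenoth tower must therefore be taken with square constant field $\mathbb{F}_{r}$ and $r=\sqrt{q}$, which is precisely why $q$ is required to be a fourth power. Once that identification is fixed, everything else is a direct substitution into the theorem.
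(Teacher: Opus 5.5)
Your proposal is correct and matches the paper's approach. The paper gives no separate proof of this corollary: it is a direct instantiation of the preceding asymptotic theorem. You take the Garcia--Stichtenoth tower over the square field $\mathbb{F}_{\sqrt{q}}$, so that $\lambda(\mathcal{F})=\sqrt[4]{q}-1$, the hypothesis $\lambda(\mathcal{F})>2$ becomes $\sqrt[4]{q}-1>2$, and $1-2m/\lambda(\mathcal{F})$ specializes exactly as you describe. Your extra remark that $\mathrm{NL}_{m}\le\mathrm{LC}=\mu_{i}$ caps the normalized complexity at $1$ is a harmless addition that the paper does not make.
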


	\section{comparison and Conclusion}
	\label{sec:6}
	In this paper, we construct new multi-sequences with high nonlinear complexity via descent narrow ray class fields by using algebraic structures of rank $1$ Drinfeld modules. It turns out that the multi-sequence generated from the descent narrow ray class fields has period $\mu_{H}=\frac{r^{d}+1}{r+1}$, high dimension $r+s_{r}$, lower bound of nonlinear complexities $$ \mathrm{NL}_{m}(\mathcal{S})\geq  \frac{(r+s_{r})\mu_{H}-ma-1}{m+r+s_{r}}.$$ We also obtain some results of $\beta-$error nonlinear complexity and it is divided into following cases
    \begin{enumerate}
    	 \item If $\beta\leq r+s_{r}-\alpha_{0}$,
		\begin{displaymath}
			\mathrm{NL}_{m,\beta}(\mathcal{S})\geq \frac{\mu_{H}(r+s_{r}-\beta)-ma-1}{m+ r+s_{r}-\beta   };
		\end{displaymath}

		\item If $\beta\leq r+s_{r}-\alpha_{0}+1$, 
        \begin{displaymath}
        \mathrm{NL}_{m,\beta}(\mathcal{S})\geq   -1+\frac{\mu_{H}(r+s_{r})-ma+m-1}{ r+s_{r}-\alpha_{0}+1+m   };
		\end{displaymath}
	\item  If $\beta\leq r+s_{r}-\alpha_{0}+2$,  $$  \mathrm{NL}_{m,\beta}(\mathcal{S})\geq  -1+\min\left\{\frac{\mu_{H}(r+s_{r}-1)-ma+m-1}{ r+s_{r}-\alpha_{0}+m   },    \frac{\mu_{H}(r+s_{r})-ma+m-1}{ r+s_{r}-\alpha_{0}+2+m   }  \right\}.  $$
    \end{enumerate}

     In particular, some asymptotic parameters are exploited by using function field towers, especially for the Garcia-Stichtenoth tower:
    \begin{displaymath}
            \lim\inf_{i\to\infty}\frac{\mathrm{NL}_{m}(\mathcal{S}_{i})}{\mu_{i}}\geq 1-\frac{2m}{\sqrt[4]{q}-1}.
       \end{displaymath}  
       As for possible future work, it is interesting to construct new multi-sequences with high nonlinear complexity by employing more advanced mathematical tools from algebraic geometry.

\end{document}